\newcommand{\MECH}{\ensuremath{\mathcal{M}}\xspace}
\newcommand{\vc}[1]{\ensuremath{\bm{#1}}\xspace}
\newcommand{\ALLCANDS}{\ensuremath{X}\xspace}
\newcommand{\ALLVOTERS}{\ensuremath{V}\xspace}
\newcommand{\ALLPERMS}{\ensuremath{\mathcal{S}_n}\xspace}
\newcommand{\PERMMAT}{\ensuremath{\mathcal{P}}\xspace}
\newcommand{\PERM}[1][]{\ensuremath{%
    \ifthenelse{\equal{#1}{}}{\pi}{\pi_{#1}}}\xspace}
\newcommand{\Perm}[2][]{\ensuremath{\PERM[#1](#2)}\xspace}
\newcommand{\PermInv}[2][]{\ensuremath{\PERM[#1]^{-1}(#2)}\xspace}
\newcommand{\PERMP}[1][]{\ensuremath{%
    \ifthenelse{\equal{#1}{}}{\pi'}{\pi'_{#1}}}\xspace}
\newcommand{\PermP}[2][]{\ensuremath{\PERMP[#1](#2)}\xspace}
\newcommand{\SCRULE}{\ensuremath{f}\xspace}
\newcommand{\SCRule}[1]{\ensuremath{\SCRULE(#1)}\xspace}
\newcommand{\Cost}[1]{\ensuremath{C(#1)}\xspace}
\newcommand{\Distortion}[1]{\ensuremath{\rho(#1)}\xspace}
\newcommand{\DIST}{\ensuremath{d}\xspace}
\newcommand{\Dist}[2]{\ensuremath{\DIST(#1,#2)}\xspace}
\newcommand{\Legal}[2]{\ensuremath{#1 \sim #2}\xspace}
\newcommand{\OPT}[1][]{\ensuremath{%
    \ifthenelse{\equal{#1}{}}{x^*}{x^*_{#1}}}\xspace}
\newcommand{\WINNER}{\ensuremath{w}\xspace}
\newcommand{\PARTS}[1][]{\ensuremath{%
    \ifthenelse{\equal{#1}{}}{M}{M_{#1}}}\xspace}
\newcommand{\PSet}[1]{\ensuremath{\Pi_{#1}}\xspace}
\newcommand{\MSG}{\ensuremath{\mu}\xspace}
\newcommand{\IC}{\ensuremath{\nu}\xspace}
\begin{document}

\title{Communication, Distortion, and Randomness in Metric Voting}
\author{David Kempe \\ University of Southern California}
\date{}

\maketitle

\begin{abstract}
  In distortion-based analysis of social choice rules over metric spaces, one assumes that all voters and candidates are jointly embedded in a common metric space. Voters rank candidates by non-decreasing distance. The mechanism, receiving only this ordinal (comparison) information, aims to nonetheless select a candidate approximately minimizing the sum of distances from all voters to the chosen candidate. It is known that while the Copeland rule and related rules guarantee distortion at most 5, many other standard voting rules, such as Plurality, Veto, or $k$-approval, have distortion growing unboundedly in the number $n$ of candidates.

An advantage of Plurality, Veto, or $k$-approval with small $k$ is that they require less communication from the voters; all deterministic social choice rules known to achieve constant distortion require voters to transmit their complete ranking of all candidates. This motivates our study of the tradeoff between the distortion and the amount of communication in deterministic social choice rules.

We show that any one-round deterministic voting mechanism in which each voter communicates only the candidates she ranks in a given set of $k$ positions must have distortion at least $\frac{2n-k}{k}$; we give a mechanism achieving an upper bound of $O(n/k)$, which matches the lower bound up to a constant.
For more general communication-bounded voting mechanisms, in which each voter communicates $b$ bits of information about her ranking, we show a slightly weaker lower bound of $\Omega(n/b)$ on the distortion.

For randomized mechanisms, the situation looks much brighter: it is known that Random Dictatorship achieves expected distortion strictly smaller than 3, almost matching a lower bound of $3-\frac{2}{n}$ for any randomized mechanism that only receives each voter's top choice. We close this gap, by giving a simple randomized social choice rule which only uses each voter's first choice, and achieves expected distortion $3-\frac{2}{n}$.

\end{abstract}

\section{Introduction} \label{sec:introduction}
In \emph{voting} or \emph{social choice},
there is a set of $n$ alternatives
(such as political candidates or courses of action)
from which a group
(such as a country or an organization)
wants to select a winner.\footnote{%
  A large and important part of the literature studies the
  goal of choosing a complete consensus ranking of all candidates;
  we will not study this alternative goal here,
  and therefore identify social choice with the selection of a single
  winner.}
Each voter submits a \emph{ranking} (or \emph{preference order})
of the candidates,
and the \emph{mechanism} (or \emph{social choice rule})
chooses a winner based on these submitted rankings.

Many different social choice rules have been proposed,
and it is an important question how to compare them.
One fruitful and long line of work,
dating back at least to the correspondence of Borda and Condorcet
\cite{borda:elections,condorcet:essay},
formulates axioms that a social choice rule ``should'' satisfy;
one can then compare social choice rules by which or how many of these
axioms they satisfy \cite{BCULP:social-choice}.
Unfortunately, many results in this area are impossibility results,
most notably Arrow's result for producing a consensus ranking
\cite{arrow:social-choice} and the
Gibbard-Satterthwaite Theorem ruling out truthful voting rules
with minimal additional properties
\cite{gibbard:manipulation,satterthwaite:voting}.

An alternative to the axiomatic approach is to consider social choice
as an \emph{optimization problem} with the goal of selecting the
``best'' candidate for the population
\cite{BCHLPS:utilitarian:distortion,caragiannis:procaccia:voting,procaccia:approximation:gibbard,procaccia:rosenschein:distortion}.
A natural way to express the notion of ``best'' is to assume that each
voter has a utility (or cost) for each candidate;
the mechanism's goal is to optimize the aggregate
(e.g., average or median) utility or cost of all voters.
However, as articulated in
\cite{boutilier:rosenschein:incomplete,anshelevich:bhardwaj:postl},
the social choice rule has to optimize with crucial information missing:
a voter can only communicate her\footnote{%
  For consistency and clarity, we will always refer to voters using
  female and candidates using male pronouns.}
\emph{ranking} according to the utility/cost.
In other words, the mechanism receives only \emph{ordinal} information ---
which candidate is preferred over which other candidate ---
even though it needs to optimize a \emph{cardinal} objective function.
From an optimization perspective, this means that the mechanism
should simultaneously optimize over all utility/cost functions that
are \emph{consistent} with the reported rankings,
in that they would give rise to the observed rankings.
The worst-case ratio (over all cost/utility functions)
between the mechanism's cost/utility and that of the optimum candidate
for the specific function is called the mechanism's
\emph{distortion}.
(Formal definitions of all concepts and terms are given in
Section~\ref{sec:preliminaries}.)

In applying this general framework, an important question is what
class of cost/utility functions to consider.
A natural approach was suggested in
\cite{anshelevich:bhardwaj:postl} (see also the expanded/improved
journal version \cite{anshelevich:bhardwaj:elkind:postl:skowron}
and general overview \cite{anshelevich:ordinal}):
all candidates and voters are jointly embedded in a metric space,
and the cost of voter $v$ for candidate $x$ is their metric distance
\Dist{v}{x}.
The assumption that voters rank candidates by non-decreasing distance
in a latent space dates back to earlier work on so-called
\emph{single-peaked preferences}
\cite{black:rationale,black:committees-elections,downs:democracy,moulin:single-peak,merrill:grofman,barbera:gul:stacchetti,barbera:social-choice},
though much of the earlier work focuses on the special case when the
metric is the line.
Using the framework of distortion and metric costs,
\cite{anshelevich:bhardwaj:postl,anshelevich:bhardwaj:elkind:postl:skowron}
show a remarkable separation.
While many commonly used voting rules (such as Plurality, Veto,
$k$-approval, Borda count) have either unbounded distortion 
or distortion linear in the number $n$ of candidates,
and indeed all score-based rules have distortion $\omega(1)$
(in terms of the number of candidates),
\emph{uncovered-set rules} have distortion at most 5.
To describe uncovered-set rules, 
consider a tournament graph $G$ on the $n$ candidates which
contains the directed edge $(x,y)$ iff at least as many voters prefer
$x$ to $y$ as vice versa.
The uncovered set of $G$ is the set of all candidates with paths of
length at most 2 to all other candidates
\cite{moulin:choosing-tournament};
an example of such a candidates is the candidate $x$ with maximum
outdegree, which is selected by the Copeland rule.
\cite{anshelevich:bhardwaj:elkind:postl:skowron} show that any
candidate in the uncovered set of $G$ has distortion at most 5,
and also show a lower bound of 3 on the distortion of
\emph{every} deterministic voting mechanism.

One advantage of some of the mechanisms with large distortion ---
such as Plurality, Veto, or $k$-approval with small $k$ ---
is that they require little communication from the voters.
Instead of having to transmit her entire ranking,
a voter under Plurality only needs to share her first choice;
similarly a voter under Veto only needs to share her last choice.
This observation raises the question of whether high distortion is
\emph{inherently} a consequence of limited communication between
voters and the mechanism.

The answer to the preceding question is clearly ``No:''
there are simple \emph{randomized} mechanisms achieving constant distortion.
Perhaps the simplest is Random Dictatorship:
``Return the first choice of a uniformly random voter.''
This mechanism is known to have distortion strictly smaller
than 3 \cite{anshelevich:postl:randomized},
a smaller distortion than any deterministic mechanism can achieve.
However, despite the frequent mathematical appeal and elegance of
randomized algorithms and mechanisms,
most organizations are leery of using randomization for
making important decisions;%
\footnote{A reader taking issue with this statement may want to think
  about his/her own computer science, mathematics, economics, or
  operations research department.
  Even though these are likely among the most savvy
  organizations in terms of understanding randomization, decision
  making procedures practically never involve randomization, except
  the occasional coin flip to break a tie. (And no, the fact that most
  of your colleagues seem to vote essentially randomly does not count!)

  The reasons for such a preference in most organization likely
  include an aversion to variance or low-probability undesirable
  events; naturally, one can envision guarantees between deterministic
  and expectation bounds, such as the bounds on the squared distortion
  in \cite{fain:goel:munagala:prabhu:referee}.
}
hence, we consider determinism a very desirable property in the design
of voting mechanisms.
Considering the following three properties: (1) low distortion, (2)
low communication, (3) determinism, it is known that any two can be
achieved simultaneously:
\begin{itemize}
  \item Random Dictatorship satisfies (1), (2).
  \item Uncovered-set mechanisms satisfy (1), (3).
  \item Plurality and many other mechanisms satisfy (2), (3).
\end{itemize}

The big-picture question we investigate in this article is the
tradeoff between all \emph{three} of these desirable properties.

\subsection{Our Models and Results}
We only consider the goal of minimizing the
\emph{average} (or total)
metric distance of all voters from the winning candidate.%
\footnote{Recall that \cite{anshelevich:bhardwaj:postl} and several
  follow-up articles studied both the average and median distance.}
Our main result, proved in Section~\ref{sec:general-lower},
is essentially a negative answer to the question of whether any voting
mechanisms can simultaneously have all three desirable properties.
We consider a model in which each voter communicates $b$ bits of
information about her ranking to the mechanism, in a single round.%
\footnote{Analyzing the distortion of multi-round deterministic
  mechanisms with limited communication is a very interesting
  direction for future work.}
Associated with each $b$-bit string \MSG is a subset \PSet{\MSG} of
rankings.
The \PSet{\MSG} must form a disjoint cover of all possible rankings.
If they did not form a cover, some voters might not have any message
to send, making the mechanism ill-defined.
And if the \PSet{\MSG} were not disjoint,
then it is not clear how a voter with multiple possible messages \MSG
would make the (non-deterministic) choice which one to send;
in particular, this choice could depend on the actual metric
distances, and it might require much more subtle definitions to
place meaningful restrictions on a mechanism to not exploit such
information.
Each voter communicates the (unique) \MSG such that her permutation
is in \PSet{\MSG}. 
We require that the same set \PSet{\MSG} is associated with the string \MSG, 
regardless of the identity of the voter sending the string.\footnote{%
  Our results require this assumption. While studying the power of
  mechanisms that allow different voters to use different encodings of
  their preferences would be interesting theoretically, voting
  mechanisms which treat votes differently \emph{a priori} tend to not
  be accepted in practice.}
Under this model, in Section~\ref{sec:general-lower},
we prove the following lower bound:

\begin{theorem} \label{thm:general-lower-intro}
Every one-round deterministic voting mechanism in which each voter
sends only a $b$-bit string to the mechanism has
distortion at least $\frac{2n-4}{b} - 1$.
\end{theorem}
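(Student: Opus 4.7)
My plan is an adversarial construction based on the mechanism's induced partition. Fix any one-round deterministic $b$-bit mechanism $\MECH$, and for each message $\MSG$ define $T(\MSG) = \{\pi(1) : \pi \in \PSet{\MSG}\}$, the set of possible top choices of voters whose ranking lies in $\PSet{\MSG}$. Since every candidate is the top of some ranking, $\bigcup_\MSG T(\MSG) = \ALLCANDS$, so $\sum_\MSG |T(\MSG)| \ge n$.

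The first step is a reduction. If some message $\MSG^*$ has $|T(\MSG^*)| \ge 2$, pick two distinct candidates $x,y \in T(\MSG^*)$, place all candidates in a uniform metric at pairwise distance $2$, and construct the two scenarios in which all $n$ voters co-locate at $x$ or at $y$. In each scenario every voter can adopt a ranking in $\PSet{\MSG^*}$ topped by her position: the definition of $T(\MSG^*)$ guarantees such a ranking exists, and it is consistent with the uniform metric because the non-top candidates are pairwise equidistant and may be tie-broken arbitrarily. Both scenarios thus produce the identical message profile ($n$ copies of $\MSG^*$), forcing the mechanism to output the same winner $w$. Since $w$ equals at most one of $x,y$, in the other scenario the optimum has cost $0$ while $w$ has cost $2n$, giving infinite distortion and trivially satisfying the bound. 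Hence I may assume $|T(\MSG)| \le 1$ for every used $\MSG$, which combined with $\sum_\MSG |T(\MSG)| \ge n$ forces $b \ge \log_2 n$.

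For the remaining regime, I plan a quantitative pigeonhole-and-metric construction. The idea is to set up a family of line-metric scenarios, each with about $n - b/2$ voters concentrated near a common ``optimum'' candidate and $b/2$ voters acting as adversarial probes, parameterized by an ``outlier'' candidate whose identity rotates through the $n$ choices. For each outlier choice, the optimum cost is $\Theta(b)$ while any candidate far from the cluster incurs cost $\Theta(n)$, so a distortion of roughly $(2n-4)/b - 1$ is realized whenever the mechanism's output is a ``wrong'' candidate. The pigeonhole step is then to argue that across the family of scenarios, the mechanism's $b$-bit partition can only single out the outlier correctly in a small subfamily, leaving a large fraction of scenarios in which the mechanism's chosen winner lies far from the cluster, thereby yielding the claimed distortion.

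The principal obstacle is this quantitative second step: extracting the factor $1/b$ rather than the weaker $1/2^b$ that a naive pigeonhole on messages would give. Achieving the linear (rather than exponential) dependence on $b$ likely requires exploiting structural constraints on partitions of $\ALLPERMS$ with unambiguous top choices -- for instance, that each voter's message, while one of $2^b$ strings, carries only $b$ bits of pairwise-comparison content about the ranking -- combined with triangle-inequality estimates on the line metric that convert message ambiguity into a cumulative cost gap. I expect the counting over the parameterized scenario family to be the most delicate component, and it is where the specific numerator $(2n-4)$ in the bound emerges.
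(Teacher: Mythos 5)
Your first step is sound and is essentially Proposition~\ref{prop:first-element} in the paper: a message that is ambiguous about the top choice immediately gives unbounded distortion. But the second step is not a proof---it is a plan with an acknowledged hole, and the hole is precisely where the theorem lives. You correctly identify that a naive pigeonhole over the $2^b$ message profiles only yields an exponentially weak $1/2^b$ dependence, and you say you ``expect'' some unspecified structural argument to upgrade it to $1/b$; no mechanism for doing so is given. The parameterized line-metric family, the choice of $b/2$ probes, and the counting you gesture at are all unspecified, so there is nothing to check.

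The idea your plan is missing is an \emph{inductive sacrifice} argument that trades a candidate for a multiplicative shrinkage of the message space. Fix $\gamma = 1 - \PARTS^{-1/(n-2)}$. At each stage there are two cases: either (i) for every candidate $x$, at least a $(1-\gamma)$ fraction of messages admit a ranking with $x$ last, in which case a direct construction (placing the mechanism's winner far from all the voters who could have ranked it last, and everyone else halfway) already gives distortion $\frac{2}{\gamma}-1$; or (ii) some candidate $x$ is consistent with being last in at most a $(1-\gamma)$ fraction of messages, in which case the adversary places $x$ ``at infinity,'' forcing all voters to rank $x$ last and thereby eliminating $x$ as a possible winner while \emph{simultaneously} cutting the number of usable messages by the factor $(1-\gamma)$. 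Iterating (ii) down to two candidates either hits case (i) at some stage or exhausts the messages entirely, and unwinding with $\gamma \le \frac{\ln \PARTS}{n-2}$ yields $\frac{2n-4}{\ln \PARTS}-1 \ge \frac{2n-4}{b}-1$. The linear-in-$b$ bound emerges from the multiplicativity of the shrinkage across $n-2$ inductive steps, not from any single pigeonhole; your construction, which fixes a single scenario family and counts at once, cannot recover this. Note also that the relevant ``ambiguity'' to exploit is about the \emph{bottom} of the ranking, not the top as in your step one.
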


Most mechanisms with limited communication are of a fairly specific
form: voters can communicate only their choices in a (small) set $K$ of
$k$ positions of their ranking,
typically at the top or bottom of their ballots.
(Either giving the candidate for each such position,
or specifying them as a set, as in $k$-approval.)
For such restricted mechanisms, a simpler proof
(in Section~\ref{sec:simple-lower}) gives a lower bound
that is stronger by a factor $\Theta(\log n)$:

\begin{theorem} \label{thm:simple-lower-intro}
  Any deterministic one-round social choice rule which receives,
  from each voter,
  no information about candidates outside positions $K$ in her ranking, 
  has distortion at least
  $\frac{2n-\SetCard{K}}{\SetCard{K}}$.
\end{theorem}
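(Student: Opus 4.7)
The plan is to generalize the classical Plurality lower-bound construction (which gives distortion $2n - 1$ at $k = 1$) by replacing individual ``top choices'' with groups of $k$ candidates at positions $K$. Let $k := \SetCard{K}$ and for simplicity assume $k$ divides $n$; set $m := n/k$. Partition the $n$ candidates into $m$ disjoint groups $G_1, \ldots, G_m$, each of size $k$. The profile consists of $m$ voters $v_1, \ldots, v_m$, where $v_j$'s ranking has the members of $G_j$ at positions $K$ (in some canonical order) and the remaining $n - k$ candidates at positions $\bar K$. Because the mechanism receives no information outside positions $K$, $v_j$'s message $\mu_j$ is determined entirely by $G_j$ and its internal order at positions $K$, so the mechanism's output $w$ depends only on the (ordered) groups. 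Let $j^* \in \{1, \ldots, m\}$ be the unique index with $w \in G_{j^*}$.

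After observing $j^*$, I would construct a one-dimensional metric realizing all of $\mu_1, \ldots, \mu_m$ by placing $v_{j^*}$ at the origin with the $k$ candidates of $G_{j^*}$ in a tight cluster around $-L$ (intra-cluster offsets chosen so that $v_{j^*}$'s induced order on $G_{j^*}$ matches $\mu_{j^*}$); and for each $j \neq j^*$, placing $v_j$ together with her group $G_j$ in a tight cluster around $+L + j\varepsilon$ (with tiny distinct intra-cluster offsets chosen so that $v_j$'s induced order on $G_j$ matches $\mu_j$). The $j\varepsilon$ spacing makes $G_j$ strictly the closest group to $v_j$, and the opposite-side placement of $G_{j^*}$ makes it strictly the closest group to $v_{j^*}$. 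For $K = \{1, \ldots, k\}$ this immediately realizes each voter's message at her top $k$ positions. For a general $K$, one uses the mechanism's blindness outside $K$ to further arrange the other $n - k$ candidates at appropriate distances so that $G_j$ lies at exactly positions $K$ in $v_j$'s ranking (rather than the top $k$).

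Finally I would compute the costs. For the winner $w \in G_{j^*}$, $\Dist{v_{j^*}}{w} \approx L$ and $\Dist{v_j}{w} \approx 2L$ for every $j \neq j^*$, so
\[
  \Cost{w} \;\ge\; L + 2L(m-1) \;=\; (2m-1)\,L.
\]
For any $c \in G_j$ with $j \neq j^*$, $\Dist{v_{j^*}}{c} \approx L$ while $\Dist{v_{j'}}{c} = O(m\varepsilon)$ for every $j' \neq j^*$, so
\[
  \Cost{c} \;\le\; L + O(m^2 \varepsilon) \;\xrightarrow{\varepsilon \to 0}\; L.
\]
Dividing gives distortion at least $2m - 1 = (2n - k)/k$, as required.

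The main obstacle is the metric-realization step for an arbitrary $K$: once the voters and groups are placed on the real line as above, we must verify that a consistent metric exists in which within every voter $v_j$'s ranking, the members of $G_j$ occupy exactly the positions in $K$, not just the top $k$. The $K = \{1, \ldots, k\}$ case is immediate from the construction; the general case is a bookkeeping exercise (placing the $n - k$ other candidates in $v_j$'s ranking at appropriate distances, possibly by going to higher dimensions), but the cost estimates above depend only on the distances between voters and their own and $v_{j^*}$'s group cluster, and are therefore unaffected.
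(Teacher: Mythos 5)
Your partition-into-groups construction does give the right ratio for the special case $K=\{1,\ldots,k\}$ (top-$k$ ballots): the arithmetic $2m-1=(2n-k)/k$ is correct, and for that $K$ the tight-cluster placement is consistent with the announced profile. However, the theorem is stated for an \emph{arbitrary} set $K$ of $k$ positions, and the step you flag as ``bookkeeping'' is in fact where the proof breaks for general $K$; moreover your claim that ``the cost estimates \ldots\ are therefore unaffected'' is not correct.

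The most concrete obstruction is $n \in K$. Your construction requires every member of $G_j$ to sit in a tight cluster right next to $v_j$ for all $j\neq j^*$. But if position $n$ belongs to $K$, then one member of $G_j$ must be $v_j$'s \emph{farthest}-ranked candidate, which is incompatible with placing that whole group on top of $v_j$. Trying to repair this by ejecting one member of each $G_j$ to a distant location is not a local fix: the voters $v_j$, $j\neq j^*$, are all within $O(m\varepsilon)$ of each other in your layout, so a candidate that is far from one is far from all of them, and you must then simultaneously satisfy, for every ordered pair $j\neq j'$, that $v_j$ ranks her own ejected candidate last while $v_{j'}$ ranks \emph{hers} last --- a set of mutually antagonistic constraints among nearly coincident points. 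A similar (milder) difficulty appears for any $K$ with large gaps or late positions: once the group clusters must be interleaved with outside candidates in each voter's ranking, the one-dimensional ``two tight clusters at $\pm L$'' picture stops producing a consistent metric, and the distances that you rely on in the cost bound do move. Going to higher dimensions does not obviously resolve the circularity, and you do not carry this out.

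The paper avoids this by not insisting on a geometric embedding at all and by treating $n\in K$ as a separate case. For $n \in K$ it runs an induction: introduce a dummy candidate at ``infinite'' distance that every voter ranks last, which removes position $n$ from $K$ and one candidate from the instance, and then invokes the hypothesis for $n-1$ and $k-1$. For $n \notin K$ it uses a symmetric profile containing every type $(S,\sigma)$, and then --- crucially --- assigns distances \emph{after} seeing $w$, not via a rigid line embedding: a voter whose top-$K$ set $S$ omits $w$ gets distance $1$ to $w$ and $\approx\varepsilon$ to everyone else (placing $w$ at position $n$, which is legal precisely because $n\notin K$), while a voter with $w\in S$ gets distance $\approx\tfrac12$ to everyone. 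This abstract metric satisfies the triangle inequality directly, is consistent with \emph{any} $K\not\ni n$, and yields the same $(2n-k)/k$ ratio. Notably, this more flexible metric would also rescue your partition profile for $K\not\ni n$; so the genuinely missing idea in your write-up is the inductive handling of $n\in K$, and secondarily the realization that the 1D embedding is an unnecessary and overly rigid restriction.
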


The proof of Theorem~\ref{thm:simple-lower-intro} is significantly
easier and cleaner than the proof of
Theorem~\ref{thm:general-lower-intro},
while still containing some of the key ideas.
Therefore, we present the proof of
Theorem~\ref{thm:simple-lower-intro} \emph{before} that of
Theorem~\ref{thm:general-lower-intro}.

Theorem~\ref{thm:simple-lower-intro} provides a generalization
of Theorem~1 of the recent work
\cite{fain:goel:munagala:prabhu:referee},
which proves linear distortion for the special case when $K$ consists of
the top $k$ positions, for constant $k$.
In fact, \cite{fain:goel:munagala:prabhu:referee} shows these lower
bounds on the \emph{expected squared distortion} of randomized
mechanisms;
this directly implies the same bounds for deterministic mechanisms.

The fact that the lower bound of Theorem~\ref{thm:simple-lower-intro}
is stronger than that of Theorem~\ref{thm:general-lower-intro}
by a factor of $\Theta(\log n)$
is discussed in more detail in Section~\ref{sec:general-lower}.
To see it most immediately,
consider the case $\SetCard{K} = k = \omega(n/\log n), k = o(n)$.
Because $k = o(n)$,
Theorem~\ref{thm:simple-lower-intro} provides a super-constant lower bound 
on the distortion.
On the other hand, communicating the positions of $k$ candidates
requires $b = \omega(n)$ bits, so the lower bound
of Theorem~\ref{thm:general-lower-intro} is vacuous.
Closing this $\Theta(\log n)$ gap is an interesting direction for
future work, discussed in Section~\ref{sec:conclusions}.

The reason we consider Theorem~\ref{thm:general-lower-intro} our main
contribution is that it helps us pinpoint the source of high
distortion.
Several recent works have shown lower bounds on the distortion of
different specific classes of social choice rules,
such as score-based rules \cite{anshelevich:bhardwaj:elkind:postl:skowron}
or the above-mentioned top-$k$ ballots \cite{fain:goel:munagala:prabhu:referee}.
Our result implies that regardless of the intricacy of the mechanism,
low communication (within the context studied here) and determinism
are enough to force high distortion.
Communication as a measure of complexity is fairly natural, as
evidenced by the mechanisms typically used in practice for large
numbers of alternatives.
Communication can also be regarded as a proxy for cognitive effort
imposed on the voters, although admittedly, the computation of a
message \MSG in a general $b$-bit bounded mechanism may still require
the voter to first determine her full ranking of all candidates.

\smallskip

The results of Theorems~\ref{thm:general-lower-intro} and
\ref{thm:simple-lower-intro} are lower bounds, raising the question of
how small one can make a mechanism's distortion when communication is
limited.
In Section~\ref{sec:upper}, we address this question, proving the
following theorem.
  
\begin{theorem} \label{thm:upper-intro}
  There is a one-round deterministic social choice rule which,
  given only each voter's top $k$ candidates (in order),
  selects a candidate with distortion at most
  $\frac{79n}{k}$.
\end{theorem}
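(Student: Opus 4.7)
The plan is to analyze the $k$-approval rule: for each candidate $x$, let $s(x) = |\{v : r_v(x) \le k\}|$ count the voters who rank $x$ within their top $k$ positions, and output $w \in \arg\max_x s(x)$; writing $T_v = \{x : r_v(x) \le k\}$ for voter $v$'s top $k$, the order within $T_v$ is not actually needed. Since each voter contributes to exactly $k$ candidates' scores, $\sum_x s(x) = Nk$ (with $N$ the number of voters), so by averaging $s(w) \ge Nk/n$. Writing $V_w = \{v : w \in T_v\}$ and $V^* = \{v : \OPT \in T_v\}$, we have $|V_w| \ge \max(|V^*|, Nk/n)$.

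The core reduction applies the triangle inequality $\Dist{v}{w} \le \Dist{v}{\OPT} + \Dist{\OPT}{w}$ summed over all voters:
$$\Cost{w} \;\le\; \Cost{\OPT} + N \cdot \Dist{\OPT}{w},$$
so it suffices to show $N \cdot \Dist{\OPT}{w} = O(n/k) \cdot \Cost{\OPT}$. I would split into cases by the overlap of $V_w$ and $V^*$. In the light-overlap case $|V_w \setminus V^*| \ge Nk/(2n)$, each voter $u$ in this set has $w \in T_u$ but $\OPT \notin T_u$, so $\Dist{u}{w} \le \Dist{u}{\OPT}$ (every candidate outside $T_u$, including $\OPT$, is at least as far as $w$). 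Triangle inequality and averaging over this set then give
$$\Dist{\OPT}{w} \;\le\; \frac{2}{|V_w \setminus V^*|} \sum_{u \in V_w \setminus V^*} \Dist{u}{\OPT} \;\le\; \frac{4n\, \Cost{\OPT}}{Nk},$$
which yields $\Cost{w} = O(n/k) \cdot \Cost{\OPT}$ in this case with a small constant.

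The main obstacle is the heavy-overlap case $|V_w \cap V^*| \ge Nk/(2n)$, where many voters rank both $w$ and $\OPT$ within their top $k$: for such a witness $u$, both $\Dist{u}{w}$ and $\Dist{u}{\OPT}$ are bounded only by the distance from $u$ to her $(k+1)$-th-ranked candidate, and this quantity need not be comparable to $\Dist{u}{\OPT}$, so Case A's averaging breaks down. To resolve this case I would combine three ingredients: (i) the triangle bound $\Dist{\OPT}{w} \le \Dist{u}{\OPT} + \Dist{u}{w}$ through any such witness $u$; (ii) the identity $\Dist{u}{w} \le \frac{1}{n-k} \sum_{x \notin T_u} \Dist{u}{x}$, exploiting that $w \in T_u$ pins it below every candidate outside the top $k$; and (iii) a complementary lower bound on $\Cost{\OPT}$ coming from voters $v \notin V^*$, whose distance to $\OPT$ dominates their distance to every candidate in $T_v$. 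A careful cardinality/averaging tradeoff between these ingredients, using the pigeonhole bound $|V_w| \ge Nk/n$ to balance how much mass can sit in each regime, is where the bulk of the technical work lies; the accumulation of the small constants from the various triangle and averaging steps is what yields the explicit constant $79$.
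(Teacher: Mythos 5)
The central flaw is that your mechanism explicitly discards the ordering within each voter's top-$k$ list and works only with the unordered sets $T_v$, scoring each candidate by how many $T_v$'s contain it. For $k \ge 2$ this rule has \emph{unbounded} distortion, as an immediate consequence of Proposition~\ref{prop:first-element}: the message ``$T_v = \{x_1,\dots,x_k\}$'' is consistent with rankings that put any of $x_1,\dots,x_k$ first. Concretely, if every voter reports the same set $T$, your rule deterministically outputs some $w \in T$; the adversary then chooses the vote profile in which every ballot ranks some fixed $y \in T\setminus\{w\}$ first, and the metric placing every voter at distance $0$ from $y$ and distance $1$ from all other candidates including $w$, yielding $\Cost{y}=0$ while $\Cost{w}>0$. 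This instance sits squarely in your ``heavy-overlap'' case --- precisely the one you flag as containing ``the bulk of the technical work'' --- and it shows that the case cannot be closed without exploiting the order. In particular your ingredient (ii), that $\Dist{u}{w}$ is below the average distance from $u$ to candidates outside $T_u$, is vacuous when $\OPT$ lies inside $T_u$ at distance near $0$ while the candidates outside $T_u$ are far; and your ingredient (iii) contributes nothing when every voter has $\OPT \in T_v$.

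The paper's proof takes a materially different route that does use the within-list order. It builds a comparison graph $G$ with an edge $(x,y)$ whenever at least an $\alpha = k/(3n)$ fraction of voters prefer $x$ to $y$ (either by relative order inside $T_v$, or because $x\in T_v$ but $y\notin T_v$), restricts attention to the set $S_2$ of candidates appearing in at least a $2\alpha$ fraction of lists, and outputs the maximum-outdegree vertex of $G[S_2]$. Lemma~\ref{lem:short-paths} shows this winner has a directed path of length at most $3$ in $G$ to every candidate, and Lemma~\ref{lem:cost-ratio} bounds the cost blow-up along such a path by $1+\tfrac{3^{3}-1}{\alpha}\le \tfrac{79n}{k}$. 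Your Case~A --- witnesses $u$ with $w\in T_u$ and $\OPT\notin T_u$ --- is essentially the single-edge instance of this path argument; the paper's length-$3$ path machinery, enabled by the ordering information, is exactly what replaces your unresolved Case~B.
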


The deterministic social choice rule of Theorem~\ref{thm:upper-intro}
is a generalization of the Copeland rule to such top-$k$ ballots.
Up to constant factors,\footnote{%
  An application of Corollary~5.3 of \cite{DistortionDuality}
  gives an upper bound of $\frac{12n}{k}$, which, however,
  is still far from matching the lower bound.}
the bounds of Theorems~\ref{thm:simple-lower-intro} and
\ref{thm:upper-intro} match.
Closing the gap between the upper and lower bound is likely difficult,
as even for $k=n$, the best-known lower bound of 3 does not match the
best current upper bound of $2+\sqrt{5} \approx 4.23$ due to
\cite{munagala:wang:improved};
whether there is a deterministic mechanism with metric
distortion 3 is a well-known open question.
Notice also that Theorem~\ref{thm:upper-intro} implies that knowing
each voter's ranking for a constant fraction of candidates is
sufficient to achieve constant distortion,
a fact that may not be a priori obvious.

As we discussed earlier, the main focus in this article is on
\emph{deterministic} mechanisms: as discussed earlier,
the \emph{Random Dictatorship} mechanism
has distortion strictly smaller than 3, achieving small distortion and
low communication simultaneously.%
\footnote{The amount by which it is smaller is of order $1/\SetCard{V}$;
  here, $\SetCard{V}$ is the the number of voters, which we consider ``large.''}
\cite{gross:anshelevich:xia:agree} prove a nearly matching lower bound:
they show that every randomized social choice rule in which each voter
only communicates her top $k < n/2$ candidates must have distortion at least
$3-\frac{2}{\Floor{n/k}}$.
However, even for $k=1$, this leaves a gap between the upper
bound of essentially $3$ for Random Dictatorship and the lower bound
of $3-\frac{2}{n}$.
Recently, \cite{fain:goel:munagala:prabhu:referee} shrunk this gap:
they proved that the Random Oligarchy mechanism
--- which samples three voters and outputs a majority of first-place
votes if it exists, and otherwise the choice 
of a random voter among the three --- achieves expected distortion
close to $3-\frac{2}{n}$, though there still remains a small gap
between the upper and lower bounds.
As an additional result, in Section~\ref{sec:randomized},
we close this remaining gap:

\begin{theorem} \label{thm:randomized-intro}
There is a simple randomized social choice rule in which each
voter only communicates her first-choice candidate,
and which achieves distortion at most $3-\frac{2}{n}$.
\end{theorem}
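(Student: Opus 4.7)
The plan is to exhibit a specific randomized rule $\MECH$ that reads only each voter's top-ranked candidate and to establish $E[\Cost{\MECH}] \leq (3 - \tfrac{2}{n}) \Cost{\OPT}$. The natural starting point is Random Dictatorship, whose standard analysis via the triangle inequality yields only the bound $3 - 2/\SetCard{\ALLVOTERS}$; this already matches $3 - 2/n$ whenever $\SetCard{\ALLVOTERS} \leq n$, but Random Dictatorship's distortion actually does exceed $3-2/n$ on some instances with many voters. (A concrete example for $n=2$: place $99$ voters at candidate $a$ and a single voter just on the $b$-side of the midpoint between $a$ and $b$; the distortion approaches $3 - 2/\SetCard{\ALLVOTERS} = 2.98$, while $3-2/n = 2$.) I would therefore consider a re-weighted variant of Random Dictatorship in which candidate $x$ is drawn with probability $q_x$ that depends super-linearly on its number $m_x$ of first-choice votes---for instance $q_x \propto m_x^2$---chosen so that, on the lower-bound instance where each candidate receives exactly one first-choice vote, the rule coincides with the uniform distribution over candidates.

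The first quantitative step is a per-candidate cost bound. For each candidate $x$, let $V_x$ be the voters with top choice $x$, and let $\bar{d}_x = \frac{1}{m_x}\sum_{v \in V_x} \Dist{v}{\OPT}$. For $v \in V_x$, the preference $\Dist{v}{x} \leq \Dist{v}{\OPT}$ is immediate. For $v \notin V_x$, the triangle inequality through any $v' \in V_x$ combined with $v'$'s preference for $x$ over $\OPT$ gives $\Dist{v}{x} \leq \Dist{v}{\OPT} + 2\Dist{v'}{\OPT}$; averaging over $v' \in V_x$ yields
\[
\Cost{x} \;\leq\; \Cost{\OPT} + 2(\SetCard{\ALLVOTERS} - m_x)\,\bar{d}_x.
\]
Together with the identity $\sum_x m_x \bar{d}_x = \Cost{\OPT}$, summing $q_x \Cost{x}$ reduces the claim to a combinatorial inequality in $\{m_x\}$ and $\{\bar{d}_x\}$ alone.

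The concluding step uses the structural fact $\sum_x m_x^2 \geq \SetCard{\ALLVOTERS}^2/n$, which follows from Cauchy--Schwarz because at most $n$ of the $m_x$'s are nonzero. The super-linear dependence of $q_x$ on $m_x$ is precisely what turns the ``voter-based'' saving of $2/\SetCard{\ALLVOTERS}$---coming from the diagonal pair $v=v'$ in the standard Random Dictatorship analysis---into the ``candidate-based'' saving of $2/n$ needed here. Tightness is witnessed exactly by the lower-bound instance, where every triangle inequality used above becomes an equality in the limit.

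The main obstacle I expect is that the per-candidate inequality above can be simultaneously slack across many $x$, so naive summation can lose a constant factor. The trade-off that makes $3 - 2/n$ achievable is that whenever the slack is large for some candidate $x$ (i.e., voters in $V_x$ sit much closer to $x$ than to $\OPT$), either $m_x$ is small---so a super-linear $q_x$ heavily suppresses its contribution---or $\bar{d}_x$ is small, so that $\Cost{x}$ is not much larger than $\Cost{\OPT}$ to begin with. Formalizing this compensation without incurring additional lower-order losses, and verifying that the resulting weights are \emph{simple} enough to qualify the rule as such, is where I expect the technical core of the proof to lie.
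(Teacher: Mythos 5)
Your intermediate bounds are all correct --- the per-candidate inequality $\Cost{x} \leq \Cost{\OPT} + 2(\SetCard{\ALLVOTERS} - m_x)\bar d_x$, the identity $\sum_x m_x \bar d_x = \Cost{\OPT}$, and $\sum_x m_x^2 \geq \SetCard{\ALLVOTERS}^2/n$ --- but the concrete rule you propose, selecting $x$ with probability $q_x \propto m_x^2$ (pure ``Proportional to Squares''), does not achieve distortion $3-\tfrac{2}{n}$, and no amount of careful bookkeeping in the final step can fix that. Since $\bar d_x$ is under adversarial control, your summation argument forces the \emph{coefficient-wise} inequality $q_x \cdot \frac{1-\nu_x}{\nu_x} \leq 1 - \frac{1}{n}$ where $\nu_x = m_x/\SetCard{\ALLVOTERS}$. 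For $q_x = \nu_x^2 / \sum_y \nu_y^2$ this quantity equals $\frac{\nu_x(1-\nu_x)}{\sum_y \nu_y^2} \geq \frac{(n-1)\nu_x(1-\nu_x)}{(n-1)\nu_x^2 + (1-\nu_x)^2}$ in the worst case (all other candidates equal), and setting $\nu_x = \frac{1}{1+\sqrt{n-1}}$ makes this $\frac{\sqrt{n-1}}{2}$, so pure Proportional to Squares has distortion on the order of $\sqrt{n}$. A concrete bad instance is the standard one where $x$ has first-place share $\approx 1/\sqrt{n}$, all other voters and candidates cluster at distance $1$ from $x$, and the $V_x$ voters sit at the midpoint: the cost ratio $\Cost{x}/\Cost{\OPT} \approx 2\sqrt{n}-1$ is sampled with probability $\approx 1/2$.

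The missing idea is that super-linear reweighting must be \emph{tempered}: the paper takes a convex combination, using Random Dictatorship with probability $\frac{n-2}{n-1}$ and Proportional to Squares with probability only $\frac{1}{n-1}$, giving $q_x = \frac{n-2}{n-1}\nu_x + \frac{1}{n-1}\cdot\frac{\nu_x^2}{\sum_y \nu_y^2}$. The small Proportional-to-Squares component supplies exactly the $O(1/n)$ saving over the bound of $3$, while the dominant Random Dictatorship component prevents the blow-up you would get from Proportional to Squares alone. The remaining work is then a single-variable calculus bound $\left(1-\tfrac{1}{n-1}\right)(1-t) + \frac{t(1-t)}{(n-1)t^2 + (1-t)^2} \leq 1-\tfrac{1}{n}$, after which the paper invokes the known reduction (Lemma~3 of \cite{gross:anshelevich:xia:agree}) from distortion to $\max_x q_x\frac{1-\nu_x}{\nu_x}$; this lemma encapsulates essentially the same triangle-inequality decomposition you rederive, so that part of your argument is on target.
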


\subsubsection*{Nature of Latent Distances}

The optimization objective of the mechanism is expressed in terms of
latent utilities, or more specifically, distances.
A subtle question is whether voters ``know'' their utilities for (or
distances to) candidates, or --- perhaps more philosophically ---
whether these utilities/distances are ``real.''
In general, one attractive feature of the distortion framework is that
it completely obviates the need to address this question:
when a mechanism achieves low distortion, it optimizes robustly over
all possible utility/distance functions consistent with the rankings,
and the question of whether voters could actually quantify the
utilities in a meaningful way is irrelevant.

However, when we focus on the design of mechanisms with low
communication, the question should be addressed explicitly,
as the answer has a strong impact on the design space for mechanisms.
When the mechanism designer has control not only over the
\emph{aggregation} of ballots, but also over the type of information
about voter preferences that is elicited,
this opens the door to designing mechanisms in which agents explicitly
communicate numerical estimates of their utilities for some candidates;
in turn, having such information may allow a mechanism to achieve
lower distortion (as we will see in related work below).
If agents themselves cannot quantify their utilities,
then not only is communication of a ranking imposed by the class of
typically used mechanisms, but it is inherently the only information
about the utilities that agents themselves may have access to.

Which of these two assumptions
(or something between the two along a more fine-grained spectrum)
is more realistic likely depends on the envisioned application.
For example, if software agents vote on a preferred alternative in a
mostly economically motivated setting, then it is very reasonable to
assume that the agents can compute (good approximations of) their
utilities.
On the other hand, when human voters choose between political
candidates, assuming an ability to quantify a metric distance in some
abstract space of political positions is much less realistic.
Thus, we believe that for both assumptions,
there are important and natural settings in which they are justified,
motivating studies of communication-distortion tradeoffs in both types
of scenarios.

\subsection{Related Work}

Communication complexity \cite{kushilevitz:nisan} generally studies
the required communication between multiple parties wishing to jointly
compute an outcome.
Several recent works have studied the communication required
specifically for jointly computing particular economic outcomes,
or --- conversely --- to bound the effects of limited communication on
such economic outcomes.
These include work on auctions and allocations
\cite{alon:nisan:raz:weinstein:limited-interaction,assadi:auctions-interaction,blumrosen:nisan:segal:bounded-communication,blumrosen:feldman:implementation-bounded,dobzinski:nisan:oren:efficiency-interaction},
persuasion \cite{LimitedSignaling},
and general mechanism design
\cite{mookherjee:tsumagari:communication-constraints}.
While the high-level concerns are similar across different domains,
the specific approaches and techniques do not appear to carry over.

The impact of communication more specifically
on social choice rules has been explored before;
see, for instance, \cite{boutilier:rosenschein:incomplete} for an overview.
However, most of the focus in past work has been on the number of bits
that need to be communicated in order to compute the outcome of a
\emph{particular} social choice rule, rather than on proving lower
bounds arising due to limited communication when the social choice rule
is not pre-specified. 
A classic paper in this context is by Conitzer and Sandholm
\cite{conitzer:sandholm:vote-elicitation}: they study
\emph{vote elicitation} rules, i.e., protocols by which a mechanism can
interact with voters to determine the winner under a particular voting
rule while not eliciting the full ranking information.
This raises algorithmic questions about whether the information
obtained so far uniquely determines a winner as well as incentive
issues, among others, and a large amount of follow-up literature
(e.g., \cite{ding:lin:voting-partial}) has studied these issues.
Relatedly, Conitzer \cite{conitzer:eliciting-single-peaked} studies
how many comparisons need to be elicited from voters to be able to
reconstruct their complete ranking, and shows that the number is
linear (as opposed to quadratic) when preferences are single-peaked
(on the line).

Several very recent papers have explicitly considered the tradeoff between
communication and distortion in social choice,
both in deterministic and randomized settings.
Perhaps most immediately related is recent work by
Fain et al.~\cite{fain:goel:munagala:prabhu:referee}.
Their focus is on mechanisms with extremely low communication which
achieve low expected squared distortion, a measure somewhere between
expected distortion and deterministic distortion.
They prove that the Random Referee mechanism, which asks two randomly
chosen voters for their top choices, and asks a third voter to choose
between these two choices, achieves constant expected squared
distortion.
Notice that this mechanism elicits different information from
different voters.
Theorem~1 of \cite{fain:goel:munagala:prabhu:referee} shows that this
is unavoidable, in that any mechanism that only obtains top-$k$ lists
(for constant $k$), even from all voters, must have linear expected
squared distortion, implying the same result for the distortion of
deterministic mechanisms.
Our Theorem~\ref{thm:simple-lower-intro} generalizes this result for
deterministic mechanisms to non-constant $k$ and sets other than the
top $k$ positions.

Another very related piece of work is due to
Mandal et al.~\cite{mandal:procaccia:shah:woodruff},
studying the communication-distortion tradeoff in a setting where the
voters have utilities (instead of costs) for the candidates,
and these utilities are only assumed to be non-negative and normalized,
but do not need to satisfy any other properties
(such as being derived from a metric).
The other major modeling difference between our work and
\cite{mandal:procaccia:shah:woodruff} is that they assume that agents
compute their message \MSG to the mechanism directly from their
utility vector, rather than the ranking.
In particular, the mechanism can be designed to allow voters to
express the strength of their preferences, albeit in possibly coarse
form.
This allows for a choice of deterministic/randomized algorithms in two
places: (1) the voters' computation of their message, and
(2) the mechanism's aggregation of the messages into a winner.
\cite{mandal:procaccia:shah:woodruff} give upper and lower bounds for
deterministic and randomized voting rules in this setting.

The positive/algorithmic results in
\cite{mandal:procaccia:shah:woodruff} are obtained primarily by
generalizing an approach of
Benad\`{e} et al.~\cite{BNPS:distortion-rule:application},
asking voters to communicate their top few candidates as well as a
suitably rounded version of their utility for those nominated
candidates.
The bounds are improved in some parameter regimes by having the mechanism
randomly select a subset of candidates and restricting voters to choose
from this subset.

While the results of \cite{mandal:procaccia:shah:woodruff} are clearly
directly related to our work, they are not immediately comparable.
Because the utilities are not derived from metrics, the mechanisms
need to deal with much broader classes of inputs, resulting in
(generally) weaker upper bounds and stronger lower bounds.
On the other hand, the assumption that voters can explicitly quantify
their utilities --- and hence have them elicited by a mechanism ---
gives a mechanism more power than in our setting.

Another related recent piece of work is on approval-based voting,
due to Pierczy\'{n}ski and Skowron \cite{pierczynski:skowron:approval}.
While much of this work focuses on a different notion of distortion
--- analyzing the fraction of voters who \emph{approve} of the winning
candidate in the sense of being ``close enough'' ---
\cite{pierczynski:skowron:approval} also analyzes the (traditional)
distortion of approval-based voting.
Under the type of mechanism that they consider, rather than approving
a given \emph{number} of voters (as in $k$-approval),
voters approve all candidates within a given \emph{distance} of
themselves, i.e., within a ball of given radius around themselves.
This approval radius can be voter-specific or uniform across voters.
In this context, the main result of
\cite{pierczynski:skowron:approval} is to show specific constant
distortion whenever a uniform approval radius ensures that a constant
fraction of voters, bounded away from 0 and 1, have the optimum
candidate within their approval radius.\footnote{%
  In particular, when that fraction is between \quarter and \half, the
  distortion is at most 3.}
It is of course not clear how a mechanism (or the voters) could
determine such a radius.
Also note that this type of approval-based mechanism does require
voters to quantify their distances,
rather than just interact with their individual ordinal rankings.

Note that Theorem~\ref{thm:upper-intro} can be considered as somewhat
related to this result. It shows that whenever voters communicate
their top $k$ candidates, where $k$ is a constant fraction of the
number of candidates, there is a mechanism with constant distortion.
However, in contrast to the result of \cite{pierczynski:skowron:approval},
not just the identity, but also the ranking of these top $k$
candidates must be communicated; on the other hand, the theorem makes
no assumptions about whether the optimum candidate appears in any of
these top-$k$ rankings.

Low communication complexity of voter preferences is also the focus of
a recent preprint by Bentert and Skowron
\cite{bentert:skowron:few-candidates}.
They study the more ``traditional'' goal of implementing given voting
rules with low communication \cite{boutilier:rosenschein:incomplete},
but are interested in \emph{approximate} implementation of these
rules.
To make approximation meaningful, they focus on score-based rules,
which naturally assign each candidate a score (such as Borda Count,
Plurality, or MiniMax).
Then, the quality of approximation is the ratio between the score of
the winner under full information vs.~the score of the winner under
limited communication. 
They focus on mechanisms in which each voter is asked to rank a small
subset of candidates; this subset is either the voter's top $k$
candidates (a deterministic mechanism) or a random subset of $k$
candidates (a randomized mechanism).

Given that the goal in \cite{bentert:skowron:few-candidates} is the
approximate implementation of specific scoring-based voting rules
rather than achieving low distortion, the results are not directly
comparable.
However, the techniques in Section~3.2 of
\cite{bentert:skowron:few-candidates} readily yield a randomized
mechanism with distortion $5+O(\epsilon)$ and very low communication
complexity per voter when the number of voters is sufficiently large.
By asking each voter to compare a uniformly random pair of candidates 
(see also \cite{hansen:random-pairs}),
and using the majority of returned votes,
with high probability (by Chernoff and Union Bounds), 
one obtains a tournament graph in which each directed edge $(x,y)$
corresponds to at least a $\half-\epsilon$ fraction of voters
preferring $x$ over $y$.
Then, a straightforward modification of the analysis of the distortion
of uncovered set rules in 
\cite{anshelevich:bhardwaj:elkind:postl:skowron}
(or a simple application of Corollary~5.3 in \cite{DistortionDuality})
gives a distortion of $5+O(\epsilon)$.
This rule only requires each voter to compute 1 bit in total.
However, different voters are asked to answer different questions,
which is often considered undesirable.
Furthermore, the total communication complexity is $n$ bits, whereas
the Random Dictator mechanism only needs to elicit $\log_2 n$ bits
from one voter.

The recent work of Bentert and Skowron is somewhat related to earlier
work of Filmus and Oren \cite{filmus:oren:top-k-voting}:
they are also interested in the question of when top-$k$ ballots from
voters are sufficient to obtain the correct candidate.
However, \cite{filmus:oren:top-k-voting} study this question under
probabilistic models for the ballots, significantly changing the
nature of the results.

The metric-based distortion view of social choice has proved to be a
very fruitful analysis framework.
In fact, it has been extended beyond social choice to other
optimization problems in which it is natural to assume that a
mechanism only receives ordinal information;
see, e.g., \cite{anshelevich:sekar:blind,anshelevich:ordinal}.

Several modeling assumptions have been proposed that yield lower
distortion than the worst-case bounds of
\cite{anshelevich:bhardwaj:elkind:postl:skowron}.
One such assumption is termed \emph{decisiveness}
\cite{anshelevich:postl:randomized,gross:anshelevich:xia:agree}:
it posits that for every voter, there is a sufficiently clear
first choice among candidates.
When the metric space is sufficiently decisive, significantly stronger
upper bounds on the distortion can be proved.
An alternative approach was proposed in
\cite{OfThePeople,BordaRepresentative}.
The authors assumed that the candidates were ``representative,''
in that they themselves were drawn i.i.d.~uniformly from the set of
voters.
Under this assumption, the authors obtained improved expected
 distortion bounds for the case of two candidates
\cite{OfThePeople}, 
and constant expected distortion for Borda count and
several other position-based scoring rules \cite{BordaRepresentative}.

As mentioned above, the gap between the upper bound of 5
(achieved, e.g., by the Copeland rule)
and the lower bound of 3 has posed an
interesting open question for several years now.
One initial conjecture of \cite{anshelevich:bhardwaj:postl}
was that the Ranked Pairs mechanism might achieve a distortion of 3.
This conjecture was disproved by \cite{goel:krishnaswamy:munagala},
who showed a lower bound of 5 on the distortion of Ranked Pairs.
Very recently, Munagala and Wang \cite{munagala:wang:improved}
have presented a (deterministic) social choice rule with
 distortion at most $2+\sqrt{5} \approx 4.23$,
which is the first piece of progress towards closing the gap.

In our and much of the preceding work on metric voting,
the focus is on distortion, while ignoring incentive compatibility.
(Recall the strong impossibility result of 
\cite{gibbard:manipulation,satterthwaite:voting}.)
The connection between strategy proofness and distortion in this type
of setting was studied in \cite{feldman:fiat:golomb}.

\section{Preliminaries} \label{sec:preliminaries}
\subsection{Voters, Candidates, and Social Choice Rules}

There are $n$ candidates, which we always denote by lowercase letters
at the end of the alphabet.
Sets of candidates are denoted by uppercase letters,
and \ALLCANDS is the set of all candidates.
The preference order (or ranking) of voter $v$ over the candidates
is a bijection $\PERM[v] : \SET{1, \ldots, n} \to \ALLCANDS$,
mapping positions $i$ to the candidate $x = \Perm[v]{i}$ which voter
$v$ ranks in position $i$.
We say that $v$ (strictly) \emph{prefers} $x$ to $y$ iff
$\PermInv[v]{x} < \PermInv[v]{y}$.
When only the ranking, but not the identity, of a voter is relevant,
we will omit the subscript $v$ for legibility.
The set of all voters\footnote{%
  We will not need to reference the number of voters explicitly.
  In general, we treat the number of voters as ``much larger'' than
  the number of candidates, and are only interested in bounds in terms
  of the number of candidates.}
is denoted by \ALLVOTERS.
We write \ALLPERMS for the set of all possible rankings
$\PERM: \SET{1, \ldots, n} \to \ALLCANDS$,
and $\PERMMAT = (\Perm[v]{i})_{v \in \ALLVOTERS, i \in \SET{1, \ldots, n}}$
for the rankings of all voters, which we call the \emph{vote profile}.
  
In the traditional full-information view,
a \emph{social choice rule} (we use the terms
\emph{mechanism} or \emph{voting mechanism} interchangeably)
$\SCRULE : \ALLPERMS^{\ALLVOTERS} \to \ALLCANDS$
is given the rankings of all voters, i.e., \PERMMAT,
and produces as output one \emph{winning} candidate
$\WINNER = \SCRule{\PERMMAT}$.
For most of this article, we are interested only in
\emph{deterministic} social choice rules \SCRULE.

\subsection{Communication-bounded mechanisms}

Our main contribution is to consider communication-bounded social
choice rules.
As in the standard model described above,
we still only consider deterministic \emph{single-round} mechanisms,
i.e., each voter can only send a single message to the mechanism.
However, this message is now also restricted to be at most $b$ bits
long.

This induces $\PARTS = 2^b$ sets
$\PSet{1}, \PSet{2}, \ldots, \PSet{\PARTS}$ of rankings;
when the mechanism receives a message \MSG from voter $v$,
all it learns is that $\PERM[v] \in \PSet{\MSG}$.
As discussed in the introduction,
we assume that the \PSet{\MSG} form a disjoint partition of \ALLPERMS,
i.e., they are pairwise disjoint and cover all rankings:
$\bigcup_{\MSG=1}^{\PARTS} \PSet{\PARTS} = \ALLPERMS$.
The fact that \PARTS is a power of 2 is not relevant anywhere in our
proofs,
so we also consider mechanisms with arbitrary numbers \PARTS of sets.

\begin{definition}[\PARTS-communication bounded social choice rule]
\label{def:communication-bounded-rule}
An \emph{\PARTS-communication bounded social choice rule} consists of
pairwise disjoint sets
$\PSet{1}, \PSet{2}, \ldots, \PSet{\PARTS} \subseteq \ALLPERMS$
with $\bigcup_{\MSG=1}^{\PARTS} \PSet{\PARTS} = \ALLPERMS$,
and a deterministic mapping 
$\SCRULE: \SET{1, \ldots, \PARTS}^{\ALLVOTERS} \to \ALLCANDS$.
\end{definition}

Communication-bounded social choice rules that are used in
practice, such as Plurality, Veto, $k$-approval, and combinations
thereof, are of a specific form:
there is a set $K$ of $k$ positions, and voters can communicate the
set of candidates they have in positions in $K$,
possibly with an ordering, but cannot communicate any additional
information about their ranking of candidates in positions outside $K$.
For such mechanisms, we will be able to prove stronger lower bounds on
the distortion, and with a significantly simpler proof.
We define them formally as follows:

\begin{definition} \label{def:k-entry-rule}
  A \emph{$k$-entry social choice rule} is an
  \PARTS-communication bounded social choice rule with the following
  additional restriction on the sets 
  $\PSet{1}, \PSet{2}, \ldots, \PSet{\PARTS}$:
  there exists a set $K \subseteq \SET{1, \ldots, n}$ of at most $k$
  positions such that 
  if $\PERM, \PERMP$ agree for all positions in $K$,
  i.e., $\Perm{i} = \PermP{i}$ for all $i \in K$, then
  $\PERM \in \PSet{\MSG}$ if and only if $\PERMP \in \PSet{\MSG}$.
\end{definition}

\subsection{Metric Space and Distortion}

The key modeling contribution of the metric-based distortion
\cite{anshelevich:bhardwaj:postl}
objective is to assume that all voters and candidates are 
embedded in a pseudo-metric space \DIST.
\Dist{v}{x} denotes the distance between voter $v$ and candidate $x$.
Being a pseudo-metric, it satisfies non-negativity and the triangle
inequality $\Dist{v}{x} \leq \Dist{v}{y} + \Dist{v'}{y} + \Dist{v'}{x}$
for all voters $v, v'$ and candidates $x, y$.
Given our choice of defining the metric only for pairs consisting of a
voter and a candidate, symmetry is not directly relevant.
One can naturally extend the pseudo-metric to pairs of candidates or
pairs of voters, but those distances will never appear in our
mechanisms or proofs.
For our upper bounds, we explicitly allow the distance between
candidates and voters
(and thus also between pairs of candidates or pairs of voters)
to be 0; however, for improved flow, we will still refer to \DIST as
a \emph{metric}.
In our lower-bound constructions, all distances will be strictly
positive; that is, we do not exploit the increased generality for
negative results.

We say that a vote profile \PERMMAT is \emph{consistent} with the
metric \DIST, and write \Legal{\DIST}{\PERMMAT},
if $\Perm[v]{x} < \Perm[v]{y}$ whenever $\Dist{v}{x} < \Dist{v}{y}$.
That is, \PERMMAT is consistent with \DIST iff all
voters rank candidates by non-decreasing distance from themselves.
Notice that in case of ties among distances,
i.e., $\Dist{v}{x} = \Dist{v}{y}$,
several vote profiles are consistent with \DIST.
None of our results depend on any tie breaking assumptions.

The \emph{cost} of candidate $x$ is
$\Cost{x} = \sum_{v} \Dist{v}{x}$,
i.e., the sum of distances of $x$ to all voters.%
\footnote{\cite{anshelevich:bhardwaj:postl} also consider the median
  distance as an optimization objective;
  here, we only focus on the sum/average objective.}
An \emph{optimum} candidate is any candidate
$\OPT[\DIST] \in \argmin_{x \in \ALLCANDS} \Cost{x}$;
in our analysis, it will not matter which candidate is considered
``the'' optimum candidate in case of ties.

The social choice rule is handicapped by not knowing the metric \DIST,
instead only observing the consistent vote profile \PERMMAT
(or some limited information about it, when communication is restricted).
Due to this handicap, and possibly other suboptimal choices,
it will typically choose candidates with higher cost than
\Cost{\OPT}.
The \emph{distortion} of \SCRULE is the worst-case ratio between the
cost of the candidate chosen by \SCRULE,
and the optimal candidate \OPT[\DIST]
(determined with knowledge of the actual distances \DIST).
Formally,
\[
  \Distortion{\SCRULE} \; = \;
  \max_{\PERMMAT} \sup_{\DIST: \Legal{\DIST}{\PERMMAT}}
  \frac{\Cost{\SCRule{\PERMMAT}}}{\Cost{\OPT[\DIST]}}.
\]
We can think of the distortion in terms of a game between the social
choice rule and an adversary.
First, the adversary chooses the vote profile \PERMMAT.
Then, the social choice rule, knowing only \PERMMAT
(or part of that information, in case of communication restrictions),
chooses a winning candidate \WINNER = \SCRule{\PERMMAT}.
Then, the adversary chooses a metric \DIST consistent with
\PERMMAT that maximizes the ratio between the cost of the
candidate chosen by \SCRULE and the optimum candidate for \DIST.

The goal now is to define a social choice rule \SCRULE
--- under suitable constraints --- that achieves small distortion
\Distortion{\SCRULE}, and to prove lower bounds on all social choice
rules under the given constraints.

\section{A Lower Bound for $k$-Entry Social Choice Rules} \label{sec:simple-lower}
In this section, we establish the lower bound of
Theorem~\ref{thm:simple-lower-intro}, restated here formally.

\begin{theorem} \label{thm:simple-lower}
  Every one-round deterministic $k$-entry social choice rule has
  distortion at least $\frac{2n-k}{k}$.
\end{theorem}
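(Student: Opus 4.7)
The plan is to exhibit, for any deterministic $k$-entry rule \SCRULE with visible positions $K$, a vote profile on which \SCRULE's output has distortion at least $(2n-k)/k$. The key observation is that for such a rule, the winner depends only on which candidates appear in positions $K$ for each voter, so one can design a ``symmetric'' partial profile where every candidate appears in positions $K$ for exactly $k$ voters. No matter which candidate the rule outputs, an adversarial metric then produces the desired ratio.

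Concretely, I would first treat the case $K = \{1, \ldots, k\}$. Take $n$ voters and $n$ candidates, and let voter $v_i$'s top-$k$ ranking be $c_i, c_{i+1}, \ldots, c_{i+k-1}$ (indices cyclic modulo $n$). Each candidate $c_\ell$ then appears in the top-$k$ of exactly $k$ voters, namely the cyclic interval $\{v_{\ell-k+1}, \ldots, v_\ell\}$. Since \SCRULE is $k$-entry, its output is determined by this partial profile; let it be $w = c_{j^*}$, and let $S$ be the corresponding set of $k$ voters having $w$ in their top-$k$.

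For the adversarial metric, take three points $P$, $Q$, and their midpoint $M$ with $d(P,Q) = D$ and $d(P,M) = d(Q,M) = D/2$; place $w$ at $P$, every other candidate at $Q$, voters in $S$ at $M$, and voters in $V \setminus S$ at $Q$. A brief case check verifies the (pseudo-)triangle inequality. For voters in $S$, every candidate sits at distance $D/2$, so any completion of the partial profile is consistent. For voters outside $S$, $w$ is strictly farthest while the other candidates are tied at distance zero, so $w$ sits at the last position while the partial profile (which places only non-$w$ candidates in positions $K$) extends consistently. Summing, $\Cost{w} = \tfrac{k}{2}D + (n-k)D = \tfrac{2n-k}{2}D$ and $\Cost{x} = \tfrac{k}{2}D$ for any $x \neq w$, giving $\Distortion{\SCRULE} \geq \Cost{w}/\Cost{x} = (2n-k)/k$.

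The main obstacle I anticipate is handling more exotic choices of $K$. When $K$ contains the final position $n$, the metric above insists on placing $w$ at position $n$ for voters outside $S$, conflicting with the partial profile, which specifies a different candidate there. I would address this by tying $w$ in distance with the partial-profile candidate at each conflicting position, and use the cyclic symmetry to choose a comparison candidate (for example $x = c_{j^*+1}$) for which this tying does not change $\Cost{x}$. When $K$ contains ``late'' positions in general, the adversary actually has additional room to make $w$ bad---for instance, Veto-style rules with $K = \{n\}$ admit unbounded distortion via a degenerate metric placing all voters at the same point---so the bound is typically loose there and easier to establish, and the real technical core is the top-$k$ case presented above.
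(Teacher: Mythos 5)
Your core construction for $K \not\ni n$ is correct and in fact a bit cleaner than the paper's: the paper uses a vote profile with one voter of each type $(S,\sigma)$, i.e., $\binom{n}{k}k!$ voter types all equally represented, and distinct distances $\epsilon + \epsilon_i$ to avoid ties, while your cyclic $n$-voter profile achieves the same $k/n$ fraction of voters with $w$ in positions $K$, and the three-point pseudo-metric (with ties) gives exactly the same cost arithmetic: $\Cost{\WINNER}=\frac{(2n-k)D}{2}$, $\Cost{x}=\frac{kD}{2}$, ratio $\frac{2n-k}{k}$. The paper's $\epsilon_i$-perturbations could be layered on top of your construction if one wants strictly-positive, distinct distances, but they are not essential.

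The genuine gap is the case $n \in K$. Your proposed fix --- tying $\Dist{v}{w}$ with $\Dist{v}{y_v}$ where $y_v$ is the candidate $v$'s profile places last --- does not produce a valid pseudo-metric when the $y_v$ differ across voters, which they do in your cyclic profile. Concretely, take two voters $v,v'\notin S$ at location $Q$ with distinct last candidates $y_v \neq y_{v'}$, and any third candidate $z \notin\{w,y_v,y_{v'}\}$. Your tying gives $\Dist{v}{y_v}=D$, $\Dist{v}{z}=\Dist{v'}{z}=\Dist{v'}{y_v}=0$, which violates $\Dist{v}{y_v}\le \Dist{v}{z}+\Dist{v'}{z}+\Dist{v'}{y_v}$. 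If instead you move the ``other'' candidates out to distance $D/2$ to repair the inequality, the non-$S$ voters now each contribute $D/2$ to $\Cost{x}$ and the ratio collapses to roughly $\frac{2n-k}{n}$, not $\frac{2n-k}{k}$. So the case $n\in K$ needs a different argument. The paper handles it with an induction on $n$: if $n\in K$, designate a candidate $\hat{x}$ that is ``infinitely far'' from everyone and hence ranked last by every voter; $\hat{x}$ can never win, so the instance reduces to $n-1$ candidates with revealed positions $K\setminus\{n\}$ of size $k-1$, and the inductive bound $\frac{2(n-1)-(k-1)}{k-1}=\frac{2n-k-1}{k-1}$ is strictly larger than $\frac{2n-k}{k}$. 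You could graft essentially this idea onto your own profile --- fix one candidate to be last for everyone, and run your cyclic construction on the remaining $n-1$ candidates and $k-1$ revealed positions --- but as written, the tying device does not work and the case is unproved. (Your observation that $K=\{n\}$ alone yields unbounded distortion is correct, but is only one instance of the $n\in K$ case, not a proof of it in general.)
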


\begin{emptyproof}
Let $K = \SET{\kappa_1 < \kappa_2 < \cdots < \kappa_k}$.
Because \emph{every} deterministic social choice rule has distortion
at least 3 \cite{anshelevich:bhardwaj:postl},
we only need to consider the case where $2n-k > 3k$, i.e., $k < n/2$.
We will prove the theorem by induction on $n$,
with the base case $n = 2$ holding because the only such case with $k
< n/2$ is $k=0$, where the mechanism receives no information about any
voter's preferences, and hence has unbounded distortion.

First, we consider the case when $n \in K$.
We designate one candidate $\hat{x}$ who is
``infinitely'' far from all other candidates and voters,
and thus ranked last by all voters.
The mechanism clearly cannot choose $\hat{x}$ as a winner.
This reduces the problem to one of $n-1$ candidates,
and a set $K' = K \setminus \SET{n}$ of $k-1$ positions at which
voters specify their ranking.
By induction hypothesis, applied to this instance,
the distortion is lower-bounded by
$\frac{2(n-1)-(k-1)}{k-1} = \frac{2n-k-1}{k-1} > \frac{2n-k}{k}$;
the inequality holds because $k < n$.

For the remainder of the proof, we can assume that $n \notin K$,
i.e., voters do not specify their least favorite candidate.
In this case, we will not need to use the induction hypothesis for $n-1$.
For each subset $S \subseteq \ALLCANDS, \SetCard{S} = k$ of $k$ candidates,
and each ordering $\sigma : \SET{1, \ldots, k} \to S$,
we say that a voter $v$ has \emph{type} $(S, \sigma)$
if she puts the candidates from $S$ in the positions $K$,
in the order given by $\sigma$.
That is, $v$ has type $(S, \sigma)$ iff
$\Perm[v]{\kappa_i} = \sigma(i)$ for $i = 1, \ldots, k$.
There are $t = {n \choose k} \cdot k!$ types of voters.
We define a vote profile which has exactly a $1/t$ fraction
of voters of type $(S, \sigma)$, for each type.
Throughout, we will talk about fractions, rather than numbers, of
voters, so that the total adds up to 1.

Each subset of candidates
and each order among those candidates is equally frequent,
and in aggregate, the vote profile expresses no preference
by the voters for any candidate over any other.
Let \WINNER be the candidate chosen by the social choice rule for
this input.
\WINNER is well-defined as a function of all voters' types,
because
(1) for each voter $v$, the message sent by $v$ is uniquely determined
by her ranking of candidates in positions in $K$, and
(2) the mechanism's output is a deterministic function of only the
messages sent by the voters.

We now define a metric space.
Let $\epsilon$ be a very small constant (we will let $\epsilon \to 0$),
and $0 < \epsilon_1 < \epsilon_2 < \cdots < \epsilon_{n} < \epsilon$.
Consider a voter $v$ of type $(S, \sigma)$.
We distinguish two cases:
\begin{enumerate}
  \item In the first case, $\WINNER \notin S$.
      Let \PERM[v] be any ordering that puts the candidates in $S$
      in positions $K$ in the order $\sigma$,
      and which additionally has $\Perm[v]{n} = \WINNER$,
      i.e., candidate \WINNER is in the last position in $v$'s ranking.
      Apart from this, \PERM[v] is arbitrary.
      By construction, a voter $v$ with ranking \PERM[v] has type
      $(S, \sigma)$.
      We now set the distance between $v$ and the candidate \WINNER
      to 1,
      and the distance from $v$ to every candidate \Perm[v]{i}
      (for $i < n$) to $\epsilon + \epsilon_i$.
      These distances are consistent with the ranking \PERM[v].
  \item In the second case, $\WINNER \in S$.
    Again, let \PERM[v] be any permutation that puts the candidates in
    $S$ in positions $K$ in the order $\sigma$ (ensuring that \PERM[v]
    is consistent with $v$ having type $(S, \sigma)$).
    This time, the position of \WINNER in \PERM is prescribed by
    $S, \sigma$, and we let the remaining positions of \PERM[v] be
    arbitrary.
    Voter $v$ has distance exactly $\half + \epsilon + \epsilon_i$
    from each candidate \Perm[v]{i},
    including the case when $\Perm[v]{\WINNER} = i$.
    Again, $v$ ranks the candidates in the order given by \PERM[v].
\end{enumerate}
We now verify that these distances satisfy the triangle inequality.
Consider voters $v, v'$ and candidates $x, y$.
We will show that $\Dist{v}{y} \leq \Dist{v}{x} + \Dist{v'}{x} +
\Dist{v'}{y}$, by distinguishing two cases for $y$:
\begin{enumerate}
\item In the first case, $y = \WINNER$. 
Then, $\half + \epsilon \leq \Dist{v}{y} \leq 1$.
Either the distance $\Dist{v'}{y} = 1$,
in which case the triangle inequality holds obviously,
or $\Dist{v'}{y} \geq \half + \epsilon$,
in which case our definition ensures that
$\Dist{v'}{x} \geq \half + \epsilon$ as well.
In either case, the triangle inequality holds.

\item In the second case, $y \neq \WINNER$,
so either $\epsilon < \Dist{v}{y} < 2 \epsilon$ or
$\half + \epsilon < \Dist{v}{y} < \half + 2 \epsilon$,
depending on the case of the definition.
Because all distances are lower-bounded by $\epsilon$,
the triangle inequality clearly holds if $\Dist{v}{y} < 2 \epsilon$.
In the other case $\half + \epsilon < \Dist{v}{y}$, we have that
$\half + \epsilon < \Dist{v}{x}$, which together with
$\epsilon < \Dist{v'}{x}$ again ensures that the triangle inequality
holds.
\end{enumerate}

Recall that \WINNER is selected by 
the social choice rule under the given rankings.
Each voter of type $(S, \sigma)$ with $\WINNER \notin S$
has cost 1 for candidate \WINNER,
and cost at most $2 \epsilon$ for any candidate $x \neq \WINNER$.
Each voter of type $(S, \sigma)$ with $\WINNER \in S$
has cost at least \half for candidate \WINNER, 
and cost at most $\half + 2\epsilon$ for each candidate $x \neq \WINNER$.

Of the $t$ types $(S, \sigma)$, exactly ${n-1 \choose k-1} \cdot k!$ have
$\WINNER \in S$.
Thus, the cost of candidate \WINNER is at least
$\frac{1}{t} \cdot (\half \cdot {n-1 \choose k-1} \cdot k!
+ 1 \cdot (t - {n-1 \choose k-1} \cdot k!))$,
while the cost of any other candidate is at most
$\frac{1}{t} \cdot (2\epsilon + \half \cdot {n-1 \choose k-1} \cdot k!)$.
Letting $\epsilon \to 0$, the distortion approaches
\[
  1 + \frac{2 (\frac{n!}{(n-k)!} - \frac{k \cdot (n-1)!}{(n-k)!})}%
           {\frac{k \cdot (n-1)!}{(n-k)!}}
  \; = \; 1 + \frac{2(n-k)}{k}
  \; = \; \frac{2n-k}{k}. \QED
\]
\end{emptyproof}

\section{The General Lower Bound} \label{sec:general-lower}
In this section, we prove the more general lower bound
of Theorem~\ref{thm:general-lower-intro}.
The bound applies to all \PARTS-communication bounded social choice rules, 
but is slightly weaker than that of Theorem~\ref{thm:simple-lower}.
To gain some insight into general communication-bounded social choice rules,
we begin with an easy proposition, independently obtained as Lemma~4.1
in \cite{mandal:procaccia:shah:woodruff}.
We include a proof here for completeness, and because it illustrates
some of the type of reasoning required for the proof of
Theorem~\ref{thm:general-lower-intro}.

\begin{proposition} \label{prop:first-element}
  Assume that there exists a set \PSet{\MSG} containing two rankings
  \PERM, \PERMP with $\Perm{1} \neq \PermP{1}$,
  i.e., there is a \MSG which does not uniquely specify the voter's
  top-ranked candidate.
  Then, the corresponding social choice rule has unbounded distortion.
\end{proposition}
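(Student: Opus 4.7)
The plan is to exhibit a single vote profile on which the mechanism behaves identically regardless of whether voters' true rankings agree with $\pi$ or $\pi'$, and then let the adversary choose the metric to be horrible for the committed winner. Concretely, consider any vote profile in which every voter has ranking $\pi$. Each voter sends the message $\mu$, and since the mapping $f$ is deterministic and a function only of the received messages, the mechanism returns a fixed candidate $w$. Now repeat the argument with every voter having ranking $\pi'$ instead: the voters again all send $\mu$, so the mechanism again returns the same $w$. But $\pi(1) \neq \pi'(1)$, so $w$ differs from at least one of these two top choices; without loss of generality, $w \neq \pi(1)$.

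Having fixed the ``all voters rank according to $\pi$'' profile as the bad instance, the next step is to design a metric $d$ consistent with this profile that makes $w$ arbitrarily worse than the optimum. The cleanest choice is to place every voter at a single point $p$; then any set of distances from $p$ to the candidates that respects the order in $\pi$ automatically yields a consistent pseudo-metric (the triangle inequality is trivial because all voter-side endpoints coincide). Let $k = \pi^{-1}(w)$, so $k \geq 2$ because $w \neq \pi(1)$. For a small parameter $\epsilon > 0$, place $\pi(i)$ at distance $i\cdot\epsilon$ from $p$ for $i < k$, place $w = \pi(k)$ at distance $1$, and place $\pi(i)$ at distance $1 + (i-k)\epsilon$ for $i > k$. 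For $\epsilon$ small enough, the distances are strictly increasing in $i$, so the metric is consistent with $\pi$.

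Under this metric, every voter incurs cost $1$ for $w$ and cost $\epsilon$ for $\pi(1)$, so $C(w) = |V|$ while $C(\pi(1)) = \epsilon|V|$. Since the optimum cost is at most $C(\pi(1))$, the distortion on this instance is at least $1/\epsilon$, and letting $\epsilon \to 0$ gives unbounded distortion, as required.

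There is no real obstacle in the argument; the only point that needs a sentence of care is the observation that the mechanism's output is forced to be the same candidate $w$ on both the all-$\pi$ profile and the all-$\pi'$ profile, which is exactly what pins down a single $w$ that must differ from at least one of $\pi(1), \pi'(1)$. Once that is noted, co-locating the voters removes every metric-feasibility complication, and the remaining computation of the cost ratio is immediate.
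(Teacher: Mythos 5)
Your proof is correct and follows essentially the same approach as the paper's: both exploit that the mechanism cannot distinguish the all-$\pi$ profile from the all-$\pi'$ profile (since both produce the same message $\mu$ from every voter), and then the adversary picks the profile for which the forced winner $w$ is not the top choice and co-locates all voters near that top choice. The only cosmetic differences are that you use a strictly-increasing $\epsilon$-separated metric and let $\epsilon\to 0$, whereas the paper sets the favorite candidate at distance $0$ (permitted since it works with pseudo-metrics) to get an infinite ratio immediately; both are valid.
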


\begin{proof}
  Let $x = \Perm{1}, y = \PermP{1}$.
  Consider a vote profile in which all voters communicate the message
  \MSG to the mechanism, i.e., state that their ranking is in \PSet{\MSG}.
  If the mechanism chooses $x$ as the winner, then the metric will
  be such that all voters have distance 0 from $y$,
  and distance 1 from all other candidates\footnote{%
    At the cost of small $\epsilon_i$, which we could then let go to 0,
    we could avoid ties here; in the limit, we would obtain exactly
    the same result. See the proof of Theorem~\ref{thm:simple-lower}
    for spelled-out details.}, including $x$.
  Then, the cost of $y$ is 0, while the cost of $x$ is 1,
  giving infinite cost ratio, i.e., distortion.
  Similarly, if the mechanism does not choose $x$ as the winner,
  then all voters will be at distance 0 from $x$ and at distance 1
  from all other candidates, including $y$.
  Again, the cost ratio between the optimum candidate $x$ and the
  winner will be infinite.
\end{proof}


\begin{theorem} \label{thm:general-lower}
Let \SCRULE be any one-round \PARTS-communication bounded social
choice rule on $n$ candidates.
Then, \SCRULE must have distortion at least
$\frac{2n-4}{\ln \PARTS} - 1$.
\end{theorem}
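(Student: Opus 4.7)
The plan is to adapt the strategy of Theorem~\ref{thm:simple-lower}, but the metric construction and accompanying counting argument must be significantly more subtle. As a first step, by Proposition~\ref{prop:first-element}, I assume without loss of generality that each part \PSet{\MSG} uniquely determines its top-ranked candidate $c(\MSG)$, since otherwise the distortion is already unbounded. I then take the balanced vote profile in which every ranking \PERM is cast by a $1/n!$ fraction of voters, so that message \MSG is received from an $\alpha_\MSG = \SetCard{\PSet{\MSG}}/n!$ fraction. Because the profile is symmetric across candidates, the mechanism returns some \WINNER, which I fix for the rest of the argument.

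Next, for each message \MSG, let $p_\MSG = \max_{\PERM \in \PSet{\MSG}} \PermInv{\WINNER}$ be the deepest position to which \WINNER can be pushed inside \PSet{\MSG}, and let $\PERM[\MSG]^* \in \PSet{\MSG}$ be a corresponding ranking. Reassigning every voter who sent \MSG to the ranking $\PERM[\MSG]^*$ preserves the messages, and hence the mechanism's winner is still \WINNER. A short averaging argument gives $\sum_\MSG \alpha_\MSG p_\MSG \geq (n+1)/2$, because $p_\MSG$ is at least the average of \PermInv{\WINNER} within \PSet{\MSG}, and this average over all $n!$ rankings equals $(n+1)/2$.

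The crux is to build a metric, consistent with these rankings, in which \WINNER has much larger cost than some alternative candidate. The two-tier construction of Theorem~\ref{thm:simple-lower}---``\WINNER at distance $1$'' for Case~1 voters and ``all distances near $\tfrac{1}{2}$'' for Case~2---is not strong enough here: an adversarial partition that concentrates the $(n-1)!$ ``\WINNER-last'' rankings among messages of size $(n-2)!$ only puts a $1/n$ fraction of voters in Case~1, yielding merely constant distortion. I therefore expect to need a multi-tier metric with roughly $\log_2 n$ geometric levels, where each voter is assigned a level according to $p_\MSG$; the distance from her to \WINNER grows with her level, and the ``near'' and ``medium'' distances within each tier form a geometric sequence so the triangle inequality is preserved (an iterated version of the $\tfrac{1}{2}$ trick). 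An alternative candidate $x^*$ can then be chosen by averaging: since the sum of distances from each voter to all candidates is fixed, the mean cost over $x \neq \WINNER$ is an immediate upper bound on $\min_{x \neq \WINNER} \Cost{x}$.

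The main obstacle---and the reason this theorem is genuinely harder than Theorem~\ref{thm:simple-lower}---is extracting the precise $\ln \PARTS$ factor. The $\ln \PARTS$ in the denominator strongly suggests an entropy- or harmonic-flavored bookkeeping: each of the \PARTS messages conveys at most $\ln \PARTS$ nats of information about a voter's ranking, and a multi-tier construction should telescope the cost ratio into $(2n-4)/\ln \PARTS - 1$. Calibrating the tier widths and distances so that the averaging argument produces exactly $2n-4$ in the numerator will likely require supplementing the bound $\sum_\MSG \alpha_\MSG p_\MSG \geq (n+1)/2$ with tail-like estimates on the distribution of $p_\MSG$ under the adversarial partition, and carefully tracking how much triangle-inequality slack each tier contributes. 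The ``$-1$'' correction in the stated bound should fall out naturally from this triangle-inequality slack used to glue the tiers together.
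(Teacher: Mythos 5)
There is a genuine gap. Your plan is a single-shot metric construction, but the one quantitative estimate you actually derive, namely $\sum_\MSG \alpha_\MSG p_\MSG \geq (n+1)/2$, does not depend on \PARTS at all: it holds with equality even in the full-information setting $\PARTS = n!$, where each $\PSet{\MSG}$ is a singleton and the mechanism can have constant distortion. Since a bound that cannot ``see'' \PARTS cannot produce a distortion lower bound that scales with $1/\ln\PARTS$, this inequality cannot be the engine of the proof. What is needed is some step that quantitatively exploits the coarseness of the partition (large sets $\PSet{\MSG}$ force indistinguishable rankings), and that step is precisely where the sketch stops being a proof: the ``multi-tier geometric levels,'' the ``entropy-flavored bookkeeping,'' and the calibration that is supposed to produce $2n-4$ and $\ln\PARTS$ are all stated as hopes rather than constructed. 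The averaging claim that the sum of distances from a voter to all candidates is fixed also cannot be invoked before the metric is defined; in a tiered construction it would vary with the tier and is exactly part of what you must design.

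The paper proves the theorem via an induction on the number of candidates that the sketch does not contain and that seems essential for extracting the $\ln\PARTS$ factor. Fix $\gamma = 1 - \PARTS^{-1/(n-2)}$. At each step, either every candidate can appear in last place in at least a $(1-\gamma)$ fraction of the message classes $\PSet{\MSG}$ --- in which case a two-distance metric (each voter whose message admits a ``\WINNER-last'' ranking sits at distance $1$ from \WINNER and $0$ from everyone else, all other voters at distance $\tfrac12$ from everyone) immediately yields distortion at least $\tfrac{2}{\gamma}-1$ --- or some candidate $x$ can be ranked last in at most a $(1-\gamma)$ fraction of message classes. In the latter case one restricts to profiles where all voters place $x$ last: this removes one candidate and, crucially, multiplies the number of usable messages by at most $(1-\gamma)$; the induction hypothesis then applies. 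Iterating shrinks the message budget from \PARTS down to $1$ in at most $n-2$ steps precisely when $\PARTS \leq (1-\gamma)^{-(n-2)}$, and a Taylor-expansion bound converts $\tfrac{2}{\gamma}-1$ into $\tfrac{2n-4}{\ln\PARTS}-1$. This ``peel off a candidate and shrink the message set'' mechanism --- which is where the dependence on \PARTS actually enters --- has no counterpart in your proposal, so as written the proposal does not establish the theorem.
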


\begin{proof}
  The high-level idea of the proof is to use induction on the number
  of candidates,
  to show that when communication is ``sufficiently bounded,''
  any social choice rule must have high distortion.
  After completing the proof by induction, we would like to apply the
  result to $n$ candidates, and ``sufficiently bounded'' must then
  include \PARTS-communication bounded.
  Therefore, the relationship between the number of candidates in the
  induction proof and the bound on communication depends on
  $n, \PARTS$, and to avoid notational ambiguity,
  we will use different variable names for the induction.
  Specifically, we use \IC for the number of candidates within the
  induction proof, and \PARTS[\IC] for the upper bound on
  communication.

  Let $\gamma = 1-\PARTS^{-1/(n-2)}$.
  We will prove by induction on \IC that every
  \PARTS[\IC]-communication bounded social choice rule on \IC candidates
  with $\PARTS[\IC] \leq \frac{1}{(1-\gamma)^{\IC-2}}$ has distortion at least
  $\frac{2}{\gamma} - 1$.

  The base case $\IC=2$ is easy:
  the communication bound is $\PARTS[2] \leq \frac{1}{(1-\gamma)^{2-2}} = 1$,
  so the voters cannot communicate any preference.
  By Proposition~\ref{prop:first-element},
  the social choice rule has unbounded distortion.
  For the induction step, we distinguish two cases:
  \begin{enumerate}
  \item In the first case, we assume that for each candidate $x$,
    at least a $1-\gamma$ fraction of all sets \PSet{\MSG} contain
    a ranking $\PERM[\MSG] \in \PSet{\MSG}$ that ranks $x$ last,
    i.e., $\Perm[\MSG]{\IC} = x$.
    Then, we consider a vote profile with \PARTS[\IC] voters in which for
    each $\MSG = 1, \ldots, \PARTS[\IC]$, exactly one voter submits \MSG.

    Let \WINNER be the candidate chosen by \SCRULE.
    Consider the following metric space: 
    For every voter $v$ who submitted \MSG such that there is a ranking
    $\PERM[\MSG] \in \PSet{\MSG}$ ranking \WINNER last,
    we define the distance between $v$ and \WINNER to be 1,
    and the distance from all other candidates\footnote{%
      Again, ties could be broken by using
      small $\epsilon_i \to 0$ without affecting the final result.}
    to be 0.
    For all other voters, the distance to all candidates is \half.
    Said differently, all candidates $x \neq \WINNER$ are at distance
    0 from each other, and at distance 1 from \WINNER.
    All voters who could possibly rank \WINNER last are in the same
    location as the candidates different from \WINNER,
    while all other voters are halfway between \WINNER and the other
    candidates.

    Then, the cost of \WINNER is at least
    $\gamma \cdot \half + (1-\gamma) \cdot 1 = 1-\frac{\gamma}{2}$,
    while the cost of each other candidate is at most
    $\gamma \cdot \half + (1-\gamma) \cdot 0 = \frac{\gamma}{2}$.
    Thus, the distortion of the mechanism is at least
    $\frac{2}{\gamma} - 1$, completing the proof directly.
    
  \item Otherwise, let $x$ be a candidate such that at most a
    $1-\gamma$ fraction of all sets \PSet{\MSG} contain a ranking
    $\PERM[\MSG] \in \PSet{\MSG}$ that ranks $x$ last.
    Define \PARTS[\IC-1] to be the number of such sets,
    and assume w.l.o.g.~(by renumbering) that
    $\PSet{1}, \PSet{2}, \ldots, \PSet{\PARTS[\IC-1]}$ are all the sets
    which contain at least one ranking with $x$ in the last position.
    By the assumption in this part of the proof, we have that
    $\PARTS[\IC-1] \leq \Floor{(1-\gamma) \cdot \PARTS[\IC]}$.
    We will only construct instances in which all voters rank $x$ last;
    thus, no voter communicates any message $\MSG > \PARTS[\IC-1]$.

    No mechanism with finite distortion can select $x$ as a
    winner, by the same argument as in the preceding case.
    (That is, the metric puts $x$ at distance 1 from all voters,
    and all other candidates at distance 0 from all voters.)
    As a result, we obtain an instance with $\IC-1$ candidates,
    only $(\IC-1)!$ remaining possible rankings,
    and --- crucially --- only
    $\PARTS[\IC-1] \leq (1-\gamma) \cdot \PARTS[\IC]$
    remaining sets of rankings.
    We can therefore apply the induction hypothesis for $\IC-1$,
    and conclude that the mechanism's distortion is at least
    $\frac{2}{\gamma} - 1$.
  \end{enumerate}

  To show that we can apply the inductive claim with $\IC=n$ in the end,
  observe that
  $\PARTS[n] = \PARTS = \PARTS^{(n-2)/(n-2)} = \frac{1}{(1-\gamma)^{n-2}}$.
  
  It remains to show that
  $\frac{2}{\gamma} - 1 \geq \frac{2n-4}{\ln \PARTS} - 1$.
  To do so, we rewrite $\gamma$ by using the Taylor expansion of
  $t^{1/(n-2)}$ around $t=1$, then apply straightforward bounds:
  \begin{align*}
    \gamma & = 1-\PARTS^{-1/(n-2)} \\
    & = \frac{1}{n-2} \sum_{k=1}^{\infty} \frac{1}{k} \cdot (1-1/\PARTS)^k
      \cdot \prod_{j=1}^{k-1} \left(1-\frac{1}{j \cdot (n-2)}\right)\\
    & \leq \frac{1}{n-2} \sum_{k=1}^{\infty} \frac{1}{k} \cdot (1-1/\PARTS)^k\\
    & = \frac{1}{n-2} \cdot \ln \PARTS.
  \end{align*}
  Substituting this bound for $\gamma$ into the distortion completes
  the proof.
\end{proof}

To compare the bound of Theorem~\ref{thm:general-lower} with that 
of Theorem~\ref{thm:simple-lower},
observe that when voters get to specify the candidates in each of $k$
(given) positions in a ranking,
this generates a partition of \ALLPERMS into
$\PARTS = {n \choose k} \cdot k! = \frac{n!}{(n-k)!}$ sets:
one for each subset and order within that subset.
These sets of rankings do in fact form a disjoint cover.
For the ``interesting'' range $k \leq n/2$, we can simply bound
$(n/2)^k \leq \PARTS \leq n^k$,
so we get that $\ln \PARTS \approx k \ln n$.
This shows that the lower bound of Theorem~\ref{thm:general-lower} is
weaker than that of Theorem~\ref{thm:simple-lower} by a factor of
$\Theta(\log n)$.
Closing this gap is an interesting direction for future work,
briefly discussed in Section~\ref{sec:conclusions}.

\section{A Near-Matching Upper Bound} \label{sec:upper}
While the results of Theorems~\ref{thm:simple-lower}
and \ref{thm:general-lower} are negative,
there are parameter ranges, such as $k = o(n), k = \omega(1)$,
in which they leave room for non-trivial positive results,
in particular, sublinear distortion.
In this section, we investigate how well one-round mechanisms can do
with limited communication.

Our main result is a $k$-entry social choice rule which
--- up to constants ---
matches the lower bound of Theorem~\ref{thm:simple-lower}.
This shows that the lower bound of Theorem~\ref{thm:simple-lower} is
essentially tight.
Not surprisingly, the mechanism is a variation on
uncovered set mechanisms, which are the only type of mechanism known
to achieve constant distortion even with access to the full vote profile.

In our mechanism, each voter communicates her top $k$ choices.
We say that voter $v$ \emph{prefers} $x$ over $y$ if either:
(1) Both $x$ and $y$ are among her top choices, and she ranks $x$
higher than $y$, or
(2) $x$ is among her top choices, and $y$ is not.
Obviously, the mechanism does not know which of two candidates she
prefers if neither candidate is among her top $k$ candidates.

As in uncovered set mechanisms like Copeland, we construct a
\emph{comparison graph} $G$ among the $n$ candidates.
Define $\alpha = \frac{k}{3n}$.
For each ordered pair $x, y$, the graph $G$ contains a directed edge
$(x,y)$ if and only if at least an $\alpha$ fraction of all voters
prefer $x$ over $y$.
Notice that because $\alpha \leq \half$, it is possible that $G$ contains
both $(x,y)$ and $(y,x)$.
Similarly, it is possible that for a pair $\SET{x,y}$,
$G$ contains neither $(x,y)$ nor $(y,x)$;
for instance, this will happen if no voter ranks either $x$ or $y$
among her top $k$ candidates.

Let $S_2$ be the set of candidates $x$ such that at least a $2\alpha$
fraction of voters rank $x$ among their top $k$ candidates.
(We will show in the proof of Lemma~\ref{lem:short-paths} that $S_2$ is not empty.)
The winner \WINNER returned by \MECH is a
candidate in the induced graph $G[S_2]$ with largest outdegree; 
notice that edges leaving $S_2$ are \emph{not} counted.

\begin{theorem} \label{thm:upper}
  \MECH has distortion at most $\frac{79n}{k}$.
\end{theorem}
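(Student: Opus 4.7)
The plan is to adapt the Copeland/uncovered-set distortion analysis to the regime where the majority threshold $\tfrac12$ is replaced by the much smaller $\alpha = k/(3n)$, restricting attention to $S_2$ so that the comparison graph regains a tournament-like structure. The argument decomposes into four parts: structural lemmas about $G[S_2]$; a short-paths lemma for \WINNER; standard triangle-inequality lemmas converting short paths into cost ratios; and an extra step to reach \OPT when $\OPT \notin S_2$.

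For the structural part I would show that $S_2 \neq \emptyset$ and that $G[S_2]$ is \emph{semi-complete}: for every pair $\{x,y\} \subseteq S_2$, at least one of the directed edges $(x,y), (y,x)$ belongs to $G$. Nonemptiness is immediate from $\sum_{x \in \ALLCANDS} \Pr_v[x \text{ in top } k] = k$, so some candidate has top-$k$ frequency at least $k/n = 3\alpha > 2\alpha$. Semi-completeness uses that if $x \in S_2$, the voters with $x$ in their top $k$ (density $\geq 2\alpha$) each strictly prefer one of $x, y$, so the two preference densities sum to at least $2\alpha$ and at least one of them is $\geq \alpha$. Because \WINNER is the maximum-outdegree vertex of the semi-complete digraph $G[S_2]$, the standard tournament argument (if \WINNER had no two-step path to $y$, every out-neighbor of \WINNER would also be an out-neighbor of $y$ and $y \to \WINNER$ would hold, contradicting maximality) shows that \WINNER reaches every other $y \in S_2$ by a path of length at most two within $G[S_2]$.

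The standard ``edge lemmas'' then yield cost comparisons. A direct edge $\WINNER \to y$ in $G$ yields $\Cost{\WINNER} \leq (1 + 2/\alpha)\Cost{y}$ by applying the triangle inequality $\Dist{v}{\WINNER} \leq \Dist{v}{y} + \Dist{v^*}{y} + \Dist{v^*}{\WINNER}$ with $v^*$ drawn from the density-$\geq \alpha$ set of voters preferring \WINNER to $y$, summing over $v$ and averaging over $v^*$. Chaining this argument along a two-step path $\WINNER \to z \to y$ yields $\Cost{\WINNER} \leq (1 + 4/\alpha)\Cost{y}$. With $\alpha = k/(3n)$, both bounds are $O(n/k)$. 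When $\OPT \in S_2$, taking $y = \OPT$ immediately gives $\Cost{\WINNER} \leq (1 + 4/\alpha)\Cost{\OPT} = O(n/k)\Cost{\OPT}$.

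The main obstacle is the case $\OPT \notin S_2$, where the short-paths lemma does not reach \OPT and a naive multiplicative chain through an intermediary $y^\star \in S_2$ would produce a quadratic $(n/k)^2$ blow-up. My plan is first to locate, via a pigeonhole on the incidence pairs $(v, y)$ with $y$ in the top $k$ of $v$ and $\OPT$ not in the top $k$ of $v$, a candidate $y^\star \in S_2$ ranked in the top $k$ by at least a $2\alpha(1-2\alpha)$ fraction of voters who do not rank \OPT in their top $k$; each such voter witnesses $\Dist{\cdot}{y^\star} \leq \Dist{\cdot}{\OPT}$. I would then execute the chaining in a \emph{single} triangle inequality from \WINNER to \OPT, using pivots drawn simultaneously from the witness set for the $\WINNER$-to-$y^\star$ short path and from the witness set for the $y^\star$-vs-\OPT comparison, so that the final bound is the \emph{sum} rather than the \emph{product} of the per-step losses, yielding $\Cost{\WINNER} = O(n/k)\Cost{\OPT}$. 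The constant $79$ in the statement arises from tracking and optimizing the subcase constants; the assumption $k < n/2$ (which makes $\alpha \leq 1/6$ and hence $2\alpha(1-2\alpha) \geq 2\alpha \cdot \tfrac23$ in the pigeonhole above) may be made without loss of generality, since for $k \geq n/2$ the claimed bound is a constant, already implied by existing upper bounds on Copeland-type rules.
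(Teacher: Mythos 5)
Your overall plan tracks the paper's actual proof: semi-completeness of $G[S_2]$, short paths from \WINNER, and a triangle-inequality lemma converting short paths into cost ratios with a per-edge penalty of order $1/\alpha$ rather than a multiplicative factor. However, there is a genuine gap at the technical heart of the argument. Your single-edge bound $\Cost{\WINNER} \leq (1 + 2/\alpha)\Cost{y}$ is correct, but ``chaining'' it along a $2$-step path $\WINNER \to z \to y$ does \emph{not} give $(1 + 4/\alpha)$: the straightforward composition gives $\Cost{\WINNER} \leq (1+2/\alpha)^2\Cost{y} = (1 + 4/\alpha + 4/\alpha^2)\Cost{y}$, and since $\alpha = k/(3n) = o(1)$ the $1/\alpha^2$ term dominates and destroys the $O(n/k)$ bound. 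Even viewing it the way you describe --- first bounding $\Dist{z}{y} \leq 2\Cost{y}/\alpha$ and then $\Dist{\WINNER}{z} \leq 2\Cost{z}/\alpha$ --- you still have to pay $\Cost{z} \leq (1+2/\alpha)\Cost{y}$, and the quadratic term reappears. Your ``simultaneous pivots'' phrasing gestures at the right goal but does not describe an argument that actually cancels the $1/\alpha^2$ (or $1/\alpha^3$, for the $\ell = 3$ case where $\OPT \notin S_2$). The paper's Lemma~\ref{lem:cost-ratio} achieves the needed linear-in-$1/\alpha$ bound $\bigl(1 + \frac{3^\ell-1}{\alpha}\bigr)$ only via a nontrivial case analysis on which edge along the path has large length relative to $\Dist{\WINNER}{z}$, identifying a single $\alpha$-fraction of voters that is far from the endpoint $z$. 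That case analysis (or an LP-duality argument such as the one the paper cites in Remark~\ref{rem:lemma-not-tight}) is the missing idea; without it, your outline is not a proof.

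A smaller point: your case split on $\OPT \in S_2$ versus $\OPT \notin S_2$ is avoidable and makes the argument messier than it needs to be. The paper defines $S_3$ (candidates in $\geq 3\alpha$ of top-$k$ lists), observes $S_3 \neq \emptyset$ by the same pigeonhole you use, and shows that \emph{every} $x \in S_3$ has a direct edge in $G$ to \emph{every} $y \notin S_2$ (because $x$ appears in $\geq 3\alpha$ of lists while $y$ appears in $< 2\alpha$, so an $\alpha$-fraction have $x$ but not $y$). Combined with the $2$-step reachability of all of $S_2$ from \WINNER, this gives a uniform length-$3$ path in $G$ from \WINNER to \emph{every} candidate, so a single application of the cost-ratio lemma with $\ell = 3$ finishes the proof. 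Your pigeonhole over voters who omit $\OPT$ is essentially reconstructing the $S_3$ argument for the special case $y = \OPT$, with extra bookkeeping ($2\alpha(1-2\alpha)$ etc.) and a need to justify $y^\star \in S_2$ separately.
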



We begin with a lemma showing the key structural property of the
winning candidate \WINNER.

\begin{lemma} \label{lem:short-paths}
  In $G$, for every candidate $x$, there is a directed path of length
  at most 3 from \WINNER to $x$.
\end{lemma}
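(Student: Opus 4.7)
The plan is to split on whether $x \in S_2$: for $x \in S_2$, exhibit a path of length at most $2$ in $G$ from $\WINNER$ to $x$; for $x \notin S_2$, first find some $y \in S_2$ with $(y,x) \in G$ and then concatenate with the length-$2$ path to $y$ produced in the first case. Two preliminary facts drive everything. First, $S_2$ is nonempty: summing over candidates the fraction of voters listing each in the top $k$ equals $k$, so by averaging some candidate reaches at least $k/n = 3\alpha > 2\alpha$. Second, $G$ is ``tournament-like'' on $S_2$: for any distinct $x, y \in S_2$, at least one of $(x, y), (y, x)$ is in $G$. The reason is that every voter ranking $x$ in her top $k$ expresses exactly one preference between $x$ and $y$ (either $x$ preferred because $y$ is not in her top $k$, or whichever is higher, when both are). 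Since at least a $2\alpha$ fraction of voters rank $x$ in top $k$, the total $x$-vs-$y$ preference mass is at least $2\alpha$, so one direction reaches $\alpha$.

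For the case $x \in S_2 \setminus \{\WINNER\}$, the argument adapts the classical Copeland two-hop reasoning to this pseudo-tournament. If $(\WINNER, x) \in G$, done. Otherwise, the tournament-like property forces $(x, \WINNER) \in G$. Writing $N^+(v) = \{u : (v, u) \in G\}$, I claim $N^+(\WINNER) \cap S_2 \not\subseteq N^+(x)$: indeed $\WINNER \in N^+(x) \cap S_2$ while $\WINNER \notin N^+(\WINNER)$, so containment would force $|N^+(\WINNER) \cap S_2| \leq |N^+(x) \cap S_2| - 1$, contradicting the maximality of $\WINNER$'s outdegree in $G[S_2]$. Picking $y \in S_2 \cap N^+(\WINNER) \setminus N^+(x)$, we have $\WINNER \to y$ in $G$ and $x \not\to y$, so the tournament-like property forces $y \to x$, yielding the length-$2$ path $\WINNER \to y \to x$.

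For $x \notin S_2$, I locate a dominator in $S_2$ via pigeonhole. Since fewer than a $2\alpha$ fraction of voters rank $x$ in top $k$, more than a $1 - 2\alpha$ fraction do not, and each such voter strictly prefers each of her $k$ top choices to $x$. Thus the preference mass against $x$ from this group alone is at least $k(1 - 2\alpha)$ (measured as a fraction of voters), spread over at most $n - 1$ other candidates, so some $y \neq x$ is preferred over $x$ by at least a $k(1 - 2\alpha)/(n - 1)$ fraction of voters. With $\alpha = k/(3n)$, this fraction simultaneously exceeds $\alpha$ (so $(y, x) \in G$) and $2\alpha$ (so $y \in S_2$, since the same voters rank $y$ in their top $k$). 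Concatenating the length-$2$ path from $\WINNER$ to $y$ with the edge $(y, x)$ gives the required length-$3$ path.

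The main obstacle is the simultaneous double inequality in the last step: both $k(1 - 2\alpha)/(n - 1) \geq \alpha$ and $k(1 - 2\alpha)/(n - 1) \geq 2\alpha$ must hold, which is exactly what pins down the constant $3$ in $\alpha = k/(3n)$ and provides enough slack for a single pigeonhole step to deliver a dominator that also lies in $S_2$. Everything else then reduces to standard tournament bookkeeping once the tournament-like property on $S_2$ is in hand.
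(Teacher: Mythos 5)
Your overall decomposition mirrors the paper's: establish that a set of ``popular'' candidates is nonempty, show $G[S_2]$ is tournament\mbox{-}like so that \WINNER (as a max-outdegree node) reaches every $x \in S_2$ in at most two hops via the standard Copeland argument, then bridge from $S_2$ to candidates outside $S_2$ with one additional edge. The first two ingredients are essentially identical to the paper's argument and are correct.

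The bridging step is where your proof diverges and, as written, has a gap. You locate a dominator $y$ of a given $x \notin S_2$ by pigeonhole: voters who do not rank $x$ in their top $k$ (more than a $1-2\alpha$ fraction) spread a preference mass of more than $k(1-2\alpha)$ over at most $n-1$ candidates, so some $y$ collects at least $k(1-2\alpha)/(n-1)$. You then assert this exceeds both $\alpha$ and $2\alpha$. The second of these, $k(1-2\alpha)/(n-1) \geq 2\alpha$ with $\alpha = k/(3n)$, simplifies to $n + 2 \geq 2k$, i.e., $k \leq (n+2)/2$; for $k$ above roughly $n/2$ it fails, the pigeonhole no longer certifies $y \in S_2$, and your length-two path from \WINNER to $y$ is unavailable. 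The lemma is stated for all $k$ and the theorem invokes it without restriction, so this regime must be covered. The paper sidesteps the per-$x$ pigeonhole entirely by introducing $S_3$ (candidates listed in at least a $3\alpha$ fraction of top-$k$ ballots): any fixed $y \in S_3$ dominates \emph{every} $x \notin S_2$ by a pure popularity-gap argument (at least $3\alpha - 2\alpha = \alpha$ of voters list $y$ but not $x$, hence prefer $y$), $S_3$ is nonempty by the very averaging you used, and $S_3 \subseteq S_2$ puts $y$ within two hops of \WINNER. That reasoning holds uniformly in $k$, which is what your pigeonhole step is missing.
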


\begin{proof}
  Similar to the definition of $S_2$,
  let $S_3$ be the set of candidates $x$ such that at least a
  $3\alpha$ fraction of the voters ranks $x$ somewhere among their top
  $k$ candidates.
  By the Pigeon Hole Principle, because each voter ranks a
  $\frac{k}{n}$ fraction of candidates in her top $k$,
  and $\alpha = \frac{k}{3n}$, at least one candidate occurs in
  a $3\alpha$ fraction of top-$k$ lists.
  In particular, $S_3$ (and thus $S_2$) is non-empty.

  Each candidate $x \in S_3$ has a directed edge to each
  candidate $y \notin S_2$.\footnote{%
  The opposite edge may be in $G$ as well; this is irrelevant.}
  This is because $x$ appears in at least a $3\alpha$ fraction of
  top-$k$ lists, while $y$ appears in at most a $2\alpha$ fraction.
  In particular, at least an $\alpha$ fraction of voters rank $x$,
  but not $y$, in their top-$k$ lists, and thus prefer $x$ to $y$.

  Now consider the induced graph $G[S_2]$.
  For each pair $x, y \in S_2$, at least one of the edges $(x,y)$ or
  $(y,x)$ is in $G[S_2]$.
  This is because of the (at least) $2\alpha$ fraction of voters with
  $x$ in their lists, at least an $\alpha$ fraction rank $y$ higher in
  their lists,
  or at least an $\alpha$ fraction rank $y$ lower
  (or not in their lists).
  Hence, $G[S_2]$ is a supergraph of a tournament graph.

  Because \WINNER has maximum degree in $G[S_2]$, it also has maximum
  degree in at least one tournament subgraph of $G[S_2]$.
  It is well known
  (see, e.g., \cite{moulin:choosing-tournament,anshelevich:bhardwaj:postl})
  that the maximum-degree node in a tournament graph 
  is in the uncovered set,
  i.e., it has a directed path of length at most 2 to every other node.
  This of course still holds in the supergraphs $G[S_2]$ and $G$.
  Thus, \WINNER has a directed path of length 2 in $G$
  to every candidate $x \in S_2$.

  Let $y \in S_3$ be arbitrary.
  By the preceding two paragraphs,
  $y$ has a directed edge to each $x \notin S_2$,
  and \WINNER has a directed path of length at most 2 to $y$.
  In summary, \WINNER has a directed path of length at most 3 to each
  candidate $x$.
\end{proof}

Next, we show a lemma upper-bounding the cost ratio of two candidates
$x, y$ when $x$ has a directed path of length at most 3 to $y$.

\begin{lemma} \label{lem:cost-ratio}
Let $\WINNER, z$ be two candidates such that there is a directed path of
length at most $\ell$ edges from \WINNER to $z$ in $G$.
Then, $\Cost{\WINNER} \leq (1+\frac{3^{\ell}-1}{\alpha}) \cdot \Cost{z}$.
\end{lemma}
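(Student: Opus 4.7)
The plan is to prove the bound by reverse induction on position along the path, accumulating a geometric coefficient per edge traversed. Fix a path $\WINNER = x_0 \to x_1 \to \cdots \to x_\ell = z$ realizing the hypothesis, and for $j = 1, \ldots, \ell$ let $A_j$ be the set of voters preferring $x_{j-1}$ over $x_j$; by the definition of an edge in $G$, $\SetCard{A_j} \ge \alpha \SetCard{\ALLVOTERS}$. For each $j$ I choose a witness voter $v_j \in A_j$ that minimizes $\Dist{v_j}{z}$, so by an averaging argument $\Dist{v_j}{z} \le \Cost{z}/(\alpha \SetCard{\ALLVOTERS})$. The key preference-to-distance translation is that whenever $v_j \in A_j$, the top-$k$ preference semantics force $\Dist{v_j}{x_{j-1}} \le \Dist{v_j}{x_j}$ (irrespective of whether $x_j$ lies in $v_j$'s top-$k$ list).

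The heart of the argument is the pointwise chain inequality, proved by reverse induction on $i$: for every voter $v$ and every $i \in \{0, 1, \ldots, \ell\}$,
\[
  \Dist{v}{x_i} \;\le\; \Dist{v}{z} \;+\; 2 \sum_{j=i+1}^{\ell} 3^{\,j-i-1}\,\Dist{v_j}{z}.
\]
The base case $i = \ell$ is trivial (empty sum). For the step from $i+1$ to $i$, start with
\[
  \Dist{v}{x_i} \;\le\; \Dist{v}{x_{i+1}} + \Dist{v_{i+1}}{x_{i+1}} + \Dist{v_{i+1}}{x_i} \;\le\; \Dist{v}{x_{i+1}} + 2\,\Dist{v_{i+1}}{x_{i+1}},
\]
where the first inequality is the pseudo-metric triangle inequality and the second uses $v_{i+1} \in A_{i+1}$ to bound $\Dist{v_{i+1}}{x_i}$ by $\Dist{v_{i+1}}{x_{i+1}}$. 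Then apply the inductive hypothesis \emph{twice}---once to $\Dist{v}{x_{i+1}}$ and once to $\Dist{v_{i+1}}{x_{i+1}}$. The coefficient $2 \cdot 3^{j-i-2}$ of each previously appearing $\Dist{v_j}{z}$ is joined by a second copy of weight $4 \cdot 3^{j-i-2}$ coming from the doubled application, totalling $2 \cdot 3^{j-i-1}$, while the new witness $v_{i+1}$ enters with coefficient $2 = 2 \cdot 3^{0}$. This reproduces the geometric pattern in the claim.

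Specializing to $i = 0$, summing over all voters, and plugging in $\Dist{v_j}{z} \le \Cost{z}/(\alpha \SetCard{\ALLVOTERS})$ together with $\sum_{j=1}^{\ell} 3^{j-1} = (3^\ell - 1)/2$ gives
\[
  \Cost{\WINNER} \;\le\; \Cost{z} + \frac{2}{\alpha}\sum_{j=1}^{\ell} 3^{\,j-1}\,\Cost{z} \;=\; \left(1 + \frac{3^{\ell}-1}{\alpha}\right)\Cost{z},
\]
matching the statement; shorter paths are covered automatically since the bound is monotone in $\ell$.

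The delicate point is to obtain the rate $3^\ell$ while keeping the dependence on $1/\alpha$ linear. A naive multiplicative composition of the one-edge bound $(1+2/\alpha)$ across $\ell$ edges would yield $(1+2/\alpha)^\ell$, which for $\alpha < 1$ contains unwanted $1/\alpha^2$ (and higher) terms and exceeds $1+(3^\ell-1)/\alpha$. The trick of unrolling the inductive hypothesis simultaneously along the target voter $v$ and along the witness voter $v_{i+1}$ at each step is exactly what lets the coefficients triple at each level without the $\alpha$-exponent ever rising above $1$.
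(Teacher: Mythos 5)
Your proof is correct, and it is a genuinely different argument from the one in the paper. The paper bounds $\Cost{\WINNER}$ via the chain $\Cost{\WINNER}\le \Cost{z}+\Dist{\WINNER}{z}$ and then lower-bounds $\Cost{z}$ against $\Dist{\WINNER}{z}$ by a case analysis: it asks which edge of the path (if any) is ``long'' relative to $\Dist{\WINNER}{z}$ with a carefully chosen geometric threshold $\frac{2\cdot 3^{\ell-i-1}}{3^\ell-1}\Dist{\WINNER}{z}$, and in either case extracts an $\alpha$-fraction of voters at distance at least $\frac{1}{3^\ell-1}\Dist{\WINNER}{z}$ from $z$. You instead work pointwise: for each edge you pick by averaging a cheap witness $v_j$ with $\Dist{v_j}{z}\le \Cost{z}/(\alpha\,\SetCard{\ALLVOTERS})$, and run a reverse induction along the path that bounds $\Dist{v}{x_i}$ for \emph{every} voter $v$ by $\Dist{v}{z}$ plus a geometric combination of the $\Dist{v_j}{z}$. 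The doubled application of the inductive hypothesis (once to the running voter $v$, once to the fresh witness $v_{i+1}$) is exactly what makes the coefficients triple per level while keeping the $1/\alpha$ dependence linear, which you correctly identify as the crux; the paper achieves the same rate through its calibrated geometric thresholds rather than an explicit recursion. I checked the coefficient bookkeeping in your inductive step ($2\cdot 3^{j-i-2}+4\cdot 3^{j-i-2}=2\cdot 3^{j-i-1}$, new term with coefficient $2\cdot 3^0$) and the final summation; both are right, and shorter paths are indeed absorbed by monotonicity in $\ell$. Net comparison: the two proofs have comparable length, but yours avoids the case split and packages the witness-selection and triangle-inequality bookkeeping into a single clean induction, whereas the paper's version more transparently exhibits the single ``bottleneck'' set of voters that certifies $\Cost{z}$ is large.
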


\begin{remark} \label{rem:lemma-not-tight}
Lemma~\ref{lem:cost-ratio} can be considered a (somewhat weaker)
generalization of a result proved in the proof of Theorem~7 in
\cite{anshelevich:bhardwaj:postl}
(see also the discussion in the subsequent remark in
\cite{anshelevich:bhardwaj:postl}).
By Lemma~6 of \cite{anshelevich:bhardwaj:postl},
if an $\alpha$ fraction of voters prefer $x$ over $y$,
then $\Cost{x} \leq 1 + \frac{2(1-\alpha)}{\alpha} \cdot \Cost{y}$.
In particular, for $\alpha = \half$, this implies an upper bound of 3
on the cost ratio.
If $x$ has a directed path of length $\ell$ to $y$, then this bound
implies\footnote{%
Theorem~7 of \cite{anshelevich:bhardwaj:postl} uses a more intricate
proof to improve the upper bound for length-2 paths from this
immediate 9 to 5.}
that $\Cost{x} \leq 3^\ell \cdot \Cost{y}$.
However, since we are interested in a regime where $\alpha = o(1)$,
the exponential dependence on $\ell$ (recall that we have $\ell=3)$
would result in bounds that do not match our lower bounds asymptotically.
The point of Lemma~\ref{lem:cost-ratio} is to improve upon this
exponential dependence.

The exponential dependence on $\ell$ is an artifact of our relatively
simple proof.
Applying Corollary~5.3 from \cite{DistortionDuality} instead
would yield an improved bound of
$\frac{\ell}{\alpha}+1$ or $\frac{\ell+1}{\alpha}-1$,
depending on whether $\ell$ is even or odd.
\end{remark}

\begin{proof}
  Let $(\WINNER,y_1,y_2, \ldots, y_{\ell-1},y_{\ell} := z)$
  be a directed path of $\ell$ edges from \WINNER to $z$.
  We distinguish two cases, based on the relative lengths of the
  distances \Dist{\WINNER}{y_i} and \Dist{y_i}{z},
  compared to \Dist{\WINNER}{z}.

  \begin{enumerate}
  \item If there exists a candidate $y_i$ (with $i < \ell$) such that
    $\Dist{y_i}{y_{i+1}} \geq \frac{2 \cdot 3^{\ell-i-1}}{3^{\ell}-1}
    \cdot \Dist{\WINNER}{z}$,
    then let $i$ be maximal with this property.    

    All the voters who prefer $y_i$ over $y_{i+1}$,
    which comprise at least an $\alpha$ fraction of all voters,
    are at distance at least
    $\frac{\Dist{y_i}{y_{i+1}}}{2} \geq \frac{3^{\ell-i-1}}{3^{\ell}-1}
    \cdot \Dist{\WINNER}{z}$ from $y_{i+1}$.

    By maximality of $i$, all candidates $y_j$ with $j > i$ have
    $\Dist{y_j}{y_{j+1}} < \frac{2 \cdot 3^{\ell-j-1}}{3^{\ell}-1}
    \cdot \Dist{\WINNER}{z}$.
    Using the triangle inequality and summing this inequality for all
    $j > i$ gives us that
    $\Dist{y_{i+1}}{z} < \frac{2 \cdot 3^{\ell-1}}{3^{\ell}-1}
    \cdot \Dist{\WINNER}{z} \cdot \sum_{j=i+1}^{\ell-1} 3^{-j}
    = \frac{2 \cdot 3^{\ell-1}}{3^{\ell}-1}
    \cdot \Dist{\WINNER}{z} \cdot \frac{3^{-i)}-3^{-(\ell-1)}}{2}
    = \frac{3^{\ell-1-i}-1}{3^{\ell}-1} \cdot \Dist{\WINNER}{z}$.

    Again by triangle inequality, the voters who prefer $y_i$ over
    $y_{i+1}$ are at distance at least
    $\Dist{y_i}{z} \geq \frac{\Dist{y_i}{y_{i+1}}}{2} - \Dist{y_{i+1}}{z}
    \geq \frac{3^{\ell-i-1}}{3^{\ell}-1} \cdot \Dist{\WINNER}{z}
       - \frac{3^{\ell-1-i}-1}{3^{\ell}-1} \cdot \Dist{\WINNER}{z}
       = \frac{1}{3^{\ell}-1} \cdot \Dist{\WINNER}{z}$
    from $z$.
    
  \item In the other case, all candidates $y_i$ with $i < \ell$ have
    $\Dist{y_i}{y_{i+1}} < \frac{2 \cdot 3^{\ell-i-1}}{3^{\ell}-1}
    \cdot \Dist{\WINNER}{z}$.
    Again, using the triangle inequality and summing this inequality
    for all $i$, we can bound
    $\Dist{y_{1}}{z} < \frac{2 \cdot 3^{\ell-1}}{3^{\ell}-1}
    \cdot \Dist{\WINNER}{z} \cdot \sum_{j=1}^{\ell-1} 3^{-j}
    = \frac{2 \cdot 3^{\ell-1}}{3^{\ell}-1}
    \cdot \Dist{\WINNER}{z} \cdot \frac{1-3^{-(\ell-1)}}{2}
    = \frac{3^{\ell-1}-1}{3^{\ell}-1} \cdot \Dist{\WINNER}{z}$.

    Therefore, by triangle inequality,
    $\Dist{\WINNER}{y_1} \geq \Dist{\WINNER}{z} - \Dist{y_1}{z}
    > \frac{3^{\ell} - 3^{\ell-1}}{3^{\ell}-1} \cdot \Dist{\WINNER}{z}
    = \frac{2 \cdot 3^{\ell-1}}{3^{\ell}-1} \cdot \Dist{\WINNER}{z}$.

    At least an $\alpha$ fraction of voters prefer \WINNER over $y_1$,
    and their distance to $y_1$ is at least
    $\half \Dist{\WINNER}{y_1} 
    > \frac{3^{\ell-1}}{3^{\ell}-1} \cdot \Dist{\WINNER}{z}$.
    Because the distance from $y_1$ to $z$ is at most
    $\frac{3^{\ell-1}-1}{3^{\ell}-1} \cdot \Dist{\WINNER}{z}$,
    by the triangle inequality, the distance of these voters from $z$
    is at least
    $\frac{3^{\ell-1}}{3^{\ell}-1} \cdot \Dist{\WINNER}{z}
    - \frac{3^{\ell-1}-1}{3^{\ell}-1} \cdot \Dist{\WINNER}{z}
    = \frac{1}{3^{\ell}-1} \cdot \Dist{\WINNER}{z}$.
  \end{enumerate}

  In both cases, we have thus shown that at least an $\alpha$
  fraction of voters are at distance at least
  $\frac{1}{3^{\ell}-1} \cdot \Dist{\WINNER}{z}$ from $z$.
  Thus, the cost of $z$ is at least
  $\frac{\alpha}{3^{\ell}-1} \cdot \Dist{\WINNER}{z}$.
  By the triangle inequality,

  \[ 
    \Cost{\WINNER}
    \; \leq \; \Cost{z} + \Dist{\WINNER}{z}
    \; \leq \; (1 + \frac{3^{\ell}-1}{\alpha}) \cdot \Cost{z}.
  \]
  This completes the proof of the lemma.
\end{proof}

\begin{extraproof}{Theorem~\ref{thm:upper}}
  By Lemma~\ref{lem:short-paths},
  \WINNER has a path of length at most 3 in $G$ to every candidate $x$;
  in particular, to the optimum candidate $x = \OPT$.
  Thus, by Lemma~\ref{lem:cost-ratio} with $\ell=3$,
  $\Cost{\WINNER} \leq (1 + \frac{26}{\alpha}) \cdot \Cost{\OPT}$.
  Substituting $\alpha = \frac{k}{3n}$ and bounding $1 \leq \frac{n}{k}$
  now completes the proof.
\end{extraproof}

\section{A Tight Upper Bound for Randomized Algorithms} \label{sec:randomized}
We have seen that limited communication is a serious handicap for
deterministic social choice rules,
in that all communication-bounded deterministic social choice rules
must have essentially linear distortion.
It is well known
\cite{anshelevich:postl:randomized,gross:anshelevich:xia:agree}
that this lower bound disappears for randomized social choice rules:
for example, the \emph{Random Dictatorship} mechanism,
which elects the first choice of a uniformly random voter,
has distortion slightly smaller than 3,
even though each voter only communicates her first choice.

When each voter can only communicate her first choice,
\cite{gross:anshelevich:xia:agree} proved a lower
bound of $3-\frac{2}{n}$ on the distortion of \emph{every} randomized
mechanism.
Fain et al.~\cite{fain:goel:munagala:prabhu:referee} showed that the
Random Oligarchy mechanism has a an upper bound on the distortion
almost matching the $3-\frac{2}{n}$ bound.
Here, we give a simple randomized mechanism which achieves an
expected distortion of exactly $3-\frac{2}{n}$,
thereby closing the remaining gap.
The mechanism \MECH is as follows:
\begin{itemize}
\item With probability $\frac{1}{n-1}$,
  select a candidate using the \emph{Proportional to Squares} mechanism.
  That is, for each candidate $x$, let $\nu_x$ be the fraction of voters
  who rank $x$ first.
  Select candidate $x$ with probability
  $\frac{\nu_x^2}{\sum_y \nu_y^2}$.
\item With the remaining probability $\frac{n-2}{n-1}$,
  select a candidate using the \emph{Random Dictatorship} mechanism.
  That is, choose a voter uniformly at random, and return her first choice.
  Notice that this mechanism selects candidate $x$ with probability
  exactly $\nu_x$.
\end{itemize}

We prove the following theorem:

\begin{theorem} \label{thm:mixed-upper-bound}
  The expected distortion of \MECH is at most
  $3-\frac{2}{n}$.
\end{theorem}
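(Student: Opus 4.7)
My plan is to bound the expected cost of \MECH by separately analyzing its two components and then reducing the combined inequality to Cauchy--Schwarz. Let $S_y = \{v : \Perm[v]{1} = y\}$, so that $\nu_y = |S_y|/|\ALLVOTERS|$, and define $D_y = \sum_{v \in S_y} \Dist{v}{\OPT}$, $T = \sum_y \nu_y D_y$, $U = \sum_y \nu_y^2 D_y$, $\sigma = \sum_y \nu_y^2$, and $S = \Cost{\OPT} = \sum_y D_y$.

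The first key ingredient is the per-candidate bound
\[
  \nu_y \Cost{y} \;\leq\; \nu_y S + 2(1 - \nu_y) D_y
\]
for every candidate $y$. I would prove this via the triangle inequality: writing $\Cost{y} - S = \sum_v (\Dist{v}{y} - \Dist{v}{\OPT})$, every voter in $S_y$ contributes non-positively (since $\Dist{v}{y} \leq \Dist{v}{\OPT}$); for every voter $v'' \notin S_y$ and any reference $v \in S_y$, the triangle inequality gives $\Dist{v''}{y} - \Dist{v''}{\OPT} \leq 2\Dist{v}{\OPT}$. Choosing $v$ to minimize $\Dist{v}{\OPT}$ and then bounding this minimum by the average over $S_y$ yields the bound. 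Summing it weighted by $\nu_y$ gives $A := \sum_y \nu_y \Cost{y} \leq 3S - 2T$, and summing weighted by $\nu_y^2/\sigma$ gives $B := \sum_y (\nu_y^2/\sigma)\Cost{y} \leq S + 2(T-U)/\sigma$; these are exactly the expected costs under Random Dictatorship and Proportional to Squares.

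Plugging into $\mathbb{E}[\Cost{\WINNER}] = \tfrac{1}{n-1} B + \tfrac{n-2}{n-1} A$ and requiring the result to be at most $(3 - 2/n)\,S$ reduces, after straightforward algebra, to the single inequality
\[
  T - U \;\leq\; \sigma\bigl((n-2) T + \tfrac{S}{n}\bigr).
\]
This is precisely where the specific mixture weights $\tfrac{1}{n-1}$ and $\tfrac{n-2}{n-1}$ show their purpose: they are calibrated to turn the two component bounds into exactly this inequality.

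Finally, the inequality follows from two applications of Cauchy--Schwarz. Writing $T = \sum_y (\nu_y \sqrt{D_y})\sqrt{D_y}$ yields $T^2 \leq U S$, hence $T - U \leq T(S-T)/S$. Applying Cauchy--Schwarz to $\sum_y \nu_y = 1$ gives $\sigma \geq 1/n$. Since $(n-2) T + S/n \geq 0$, it therefore suffices to show $T(S-T)/S \leq \tfrac{1}{n}\bigl((n-2) T + S/n\bigr)$, which after clearing denominators rearranges to $(nT - S)^2 \geq 0$. The main subtlety I expect is recognizing how to combine the two bounds so that the slacks in the two Cauchy--Schwarz steps vanish simultaneously in the uniform case $\nu_y = 1/n$, which is also the tight lower-bound instance; this explains both why the mixture has to look like this and why the final bound is exactly $3 - 2/n$.
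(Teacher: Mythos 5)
Your proof is correct, and it takes a genuinely different route from the paper. The paper invokes Lemma~3 of \cite{gross:anshelevich:xia:agree} as a black box --- a reduction stating that the distortion of a first-choice mechanism is $1 + 2\max_{\vc{\nu},x} q_x(\vc{\nu}) \cdot \frac{1-\nu_x}{\nu_x}$ --- and then reduces the theorem to a one-variable algebraic inequality (Lemma~\ref{lem:technical-bound}), proved by calculus. That aggregation is \emph{pointwise over candidates}: the mixture weights are engineered so that for every single plurality score $\nu_x$ the electing probability is small enough. Your argument is self-contained and aggregates \emph{across} candidates instead: you prove the per-candidate cost bound $\nu_y \Cost{y} \leq \nu_y \Cost{\OPT} + 2(1-\nu_y) D_y$ directly from the triangle inequality (this is essentially the content lurking inside the GAX lemma), sum it with the two weight vectors $(\nu_y)$ and $(\nu_y^2/\sigma)$, and then combine the resulting bounds via $UT$-versus-$T^2$ and $\sigma \ge 1/n$ Cauchy--Schwarz steps, reducing to $(nT-S)^2 \ge 0$. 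What the paper's route buys is brevity and reuse of a general tool already tailored to first-choice mechanisms; what yours buys is a proof that stands on its own without citation, and a transparent explanation of exactly where the factor $3 - 2/n$ comes from and why both slacks vanish simultaneously at the uniform profile $\nu_y = 1/n$. One minor point worth noting in a write-up: for candidates $y$ with $\nu_y = 0$ the set $S_y$ is empty, so the triangle-inequality derivation does not apply, but the per-candidate inequality holds trivially ($0 \le 0$), and you should say so before multiplying by $\nu_y/\sigma$ to form $B$.
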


The proof is straightforward:
it consists of a bit of arithmetic and using Lemma~3 of
\cite{gross:anshelevich:xia:agree}, restated here in our notation.

\begin{lemma}[Lemma~3 of \cite{gross:anshelevich:xia:agree}]
\label{lem:GAX}
Let $\vc{\nu} = (\nu_x)_x$ be the vector of the fractions of voters
ranking candidate $x$ first, for all $x$.
Suppose that for every such first-place vote vector $\vc{\nu}$
and every candidate $x$,
the probability of electing $x$ under \MECH is at most
$q_{x}(\vc{\nu})$.
Then, the distortion of \MECH is at most
$1 + 2 \max_{\vc{\nu},x} (q_{x}(\vc{\nu}) \cdot \frac{1-\nu_x}{\nu_x})$. 
\end{lemma}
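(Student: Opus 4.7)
The plan is to apply Lemma~\ref{lem:GAX} with $q_x(\vc{\nu})$ chosen to be the exact probability that \MECH elects candidate $x$. Directly from the definition of \MECH, this probability is
$$q_x(\vc{\nu}) \;=\; \frac{1}{n-1} \cdot \frac{\nu_x^2}{\sum_y \nu_y^2} \;+\; \frac{n-2}{n-1} \cdot \nu_x.$$
By Lemma~\ref{lem:GAX}, to conclude that the distortion is at most $3-\frac{2}{n}$, it suffices to show that $q_x(\vc{\nu}) \cdot \frac{1-\nu_x}{\nu_x} \leq \frac{n-1}{n}$ for every first-place vote vector $\vc{\nu}$ and every candidate $x$, since this yields a distortion bound of $1 + 2 \cdot \frac{n-1}{n} = 3-\frac{2}{n}$.

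Multiplying through by $(1-\nu_x)/\nu_x$, the target inequality becomes
$$\frac{1}{n-1} \cdot \frac{\nu_x(1-\nu_x)}{\sum_y \nu_y^2} \;+\; \frac{n-2}{n-1}(1-\nu_x) \;\leq\; \frac{n-1}{n}.$$
The left-hand side depends on $\vc{\nu}$ beyond $\nu_x$ only through the denominator $\sum_y \nu_y^2$, so the worst case requires \emph{minimizing} this sum. Applying Cauchy--Schwarz (equivalently, convexity of $t \mapsto t^2$) to the $n-1$ coordinates other than $x$, which sum to $1-\nu_x$, gives
$$\sum_y \nu_y^2 \;\geq\; \nu_x^2 + \frac{(1-\nu_x)^2}{n-1}.$$
Substituting this lower bound and writing $a = \nu_x$ reduces the claim to a single-variable inequality in $a \in [0,1]$.

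The main (and essentially only) obstacle is verifying this single-variable inequality cleanly. After clearing denominators, it becomes
$$p(a) \;:=\; \bigl(1 + n(n-2)a\bigr)\bigl((n-1)a^2 + (1-a)^2\bigr) \;-\; n(n-1)\, a(1-a) \;\geq\; 0.$$
Here I would be guided by the observation that the uniform profile $\vc{\nu} = (1/n,\ldots,1/n)$ makes every inequality in the chain tight (both the Cauchy--Schwarz step and the overall distortion bound), so $a = 1/n$ must be a root of $p$, and in fact a \emph{double} root since it is an interior maximum of the original expression. I would verify by direct polynomial expansion and factoring (using $a=1/n$ as a guess and then dividing twice) that
$$p(a) \;=\; (na - 1)^2 \cdot \bigl((n-2)a + 1\bigr),$$
which is manifestly non-negative on $a \in [0,1]$ for $n \geq 2$. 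This completes the proof, and the double root at $a=1/n$ simultaneously certifies that the bound $3-\frac{2}{n}$ is sharp for \MECH, matching the lower bound of \cite{gross:anshelevich:xia:agree}.
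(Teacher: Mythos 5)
Your proposal does not prove the statement you were assigned. The statement is Lemma~\ref{lem:GAX} itself, but your argument \emph{invokes} Lemma~\ref{lem:GAX} as a black box and uses it to derive the bound $3-\frac{2}{n}$ for \MECH; that is a proof of Theorem~\ref{thm:mixed-upper-bound} (together with Lemma~\ref{lem:technical-bound}), and as an answer for the lemma it is circular. The actual content of Lemma~\ref{lem:GAX} is a metric-space argument that never appears in your write-up: one must relate $\Cost{x}$ to $\Cost{\OPT}$ via the voters who rank $x$ first (for such a voter $v$, $\Dist{v}{x} \leq \Dist{v}{\OPT}$, and the triangle inequality through $v$ controls the distance between $x$ and \OPT), and then argue about the worst-case metric jointly over all candidates. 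Note in particular that the lemma's conclusion involves a \emph{maximum} of $q_x(\vc{\nu})\cdot\frac{1-\nu_x}{\nu_x}$ over $x$, not a probability-weighted sum; the naive combination of the per-candidate bound $\Cost{x} \leq \bigl(1+2\frac{1-\nu_x}{\nu_x}\bigr)\Cost{\OPT}$ with $p_x \leq q_x(\vc{\nu})$ only yields the much weaker sum version (e.g., for the uniform vector $\vc{\nu}=(1/n,\ldots,1/n)$ the sum is $n-1$ while the max is $\frac{n-1}{n}$), so a correct proof must show that the adversarial metric cannot make all candidates simultaneously bad. None of this is addressed. (For the record, the paper does not prove this lemma either; it is imported verbatim from \cite{gross:anshelevich:xia:agree}, so there is no in-paper proof to match --- but assuming the lemma is not an option for this assignment.)

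Judged instead as a proof of Theorem~\ref{thm:mixed-upper-bound}, your argument is correct and follows the paper's route exactly: the same exact election probabilities $q_x(\vc{\nu})$, the same convexity/Cauchy--Schwarz bound $\sum_y \nu_y^2 \geq \nu_x^2 + \frac{(1-\nu_x)^2}{n-1}$, and the same single-variable inequality, which is precisely Lemma~\ref{lem:technical-bound}. Your finish is in fact cleaner than the paper's: clearing denominators does give $p(a) = \bigl(1+n(n-2)a\bigr)\bigl(na^2-2a+1\bigr) - n(n-1)a(1-a)$, and this factors as $(na-1)^2\bigl((n-2)a+1\bigr) \geq 0$ (I verified the expansion), replacing the paper's two derivative tests by an identity. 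One caveat: the double root at $a=\frac{1}{n}$ only shows that the intermediate bound $q_x(\vc{\nu})\cdot\frac{1-\nu_x}{\nu_x} \leq \frac{n-1}{n}$ is attained; it does not by itself certify that the distortion of \MECH equals $3-\frac{2}{n}$ --- that sharpness comes from the lower bound of \cite{gross:anshelevich:xia:agree}.
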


The main technical lemma, proved momentarily, is the following:

\begin{lemma} \label{lem:technical-bound}
For all $t \in [0,1]$, we have that
$(1-\frac{1}{n-1}) \cdot (1-t) + \frac{t (1-t)}{(n-1) t^2 + (1-t)^2}
\leq 1 - \frac{1}{n}$.
\end{lemma}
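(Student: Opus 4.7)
The plan is to clear denominators and reduce the claim to showing a single cubic polynomial is non-negative on $[0,1]$. First I would move the linear term $(1-\tfrac{1}{n-1})(1-t)$ to the right-hand side and simplify: a routine computation gives $\tfrac{n-1}{n} - \tfrac{n-2}{n-1}(1-t) = \tfrac{1 + n(n-2)\,t}{n(n-1)}$, so the inequality is equivalent to
\[
\frac{t(1-t)}{(n-1)t^2 + (1-t)^2} \;\leq\; \frac{1 + n(n-2)\,t}{n(n-1)}.
\]
Since $(n-1)t^2 + (1-t)^2 > 0$ for every $t \in [0,1]$ (the two terms cannot simultaneously vanish) and $n(n-1) > 0$, I can clear denominators without changing the direction of the inequality, reducing the claim to $P(t) \geq 0$ for all $t \in [0,1]$, where
\[
P(t) \;:=\; \bigl(1 + n(n-2)t\bigr)\bigl((n-1)t^2 + (1-t)^2\bigr) - n(n-1)\,t(1-t).
\]

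Expanding and collecting terms (straightforward but tedious) gives
\[
P(t) \;=\; n^2(n-2)\,t^3 \;-\; n(n-4)\,t^2 \;-\; (n+2)\,t \;+\; 1.
\]
The heart of the argument is the observation that $t = 1/n$ is a \emph{double} root of $P$. I would verify this by direct substitution: $P(1/n) = \tfrac{(n-2) - (n-4) - (n+2)}{n} + 1 = 0$, and similarly $P'(1/n) = 3(n-2) - 2(n-4) - (n+2) = 0$. Given the double root, factoring out $(t - 1/n)^2$ and matching the constant terms forces the remaining linear factor, yielding
\[
P(t) \;=\; n^2(n-2)\Bigl(t - \tfrac{1}{n}\Bigr)^{\!2}\Bigl(t + \tfrac{1}{n-2}\Bigr).
\]
For $n \geq 3$, all three factors on the right are non-negative on $[0,1]$, so $P(t) \geq 0$. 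The degenerate case $n=2$ can be handled separately: there $P$ collapses to $(2t-1)^2 \geq 0$.

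The main obstacle I anticipate is guessing the factorization; without the observation that equality in the overall claim is attained exactly at $t = 1/n$ (which is consistent with the tight lower bound $3 - \tfrac{2}{n}$ achieved when one candidate gets a $1/n$ fraction of first-place votes), one might be tempted to attack the cubic by calculus or by clumsier bounding. Once the double root is spotted, however, the factorization is forced by matching the leading and constant coefficients, and the remaining verification is trivial.
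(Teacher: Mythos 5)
Your proof is correct, and it takes a genuinely different route from the paper's. The paper avoids clearing the denominator: it observes that the denominator $g(t) = (n-1)t^2+(1-t)^2$ is bounded below by its minimum value $\frac{n-1}{n}$ (attained at $t=1/n$), substitutes, and is then left with a concave quadratic in $t$ whose vertex at $t=1/n$ gives value exactly $1-\frac1n$. Your approach clears denominators to get the cubic $P(t) = n^2(n-2)t^3 - n(n-4)t^2 - (n+2)t + 1$ and establishes the factorization $P(t) = n^2(n-2)(t-\tfrac1n)^2(t+\tfrac{1}{n-2})$, which I verified is correct (leading, $t^2$, $t$, and constant coefficients all match), from which nonnegativity on $[0,1]$ is immediate for $n\geq 3$, with $n=2$ handled by the degeneration to $(2t-1)^2$. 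The paper's argument is shorter, but it relies on a fortunate alignment --- the pointwise underestimate of $g$ loses nothing because the minimizer of $g$ and the maximizer of the resulting quadratic coincide at $t=1/n$. Your factorization makes the tightness at $t=1/n$ (and the mechanism's alignment with the $3-\tfrac2n$ lower bound) completely explicit and is perhaps easier to independently verify, at the cost of slightly more algebra.
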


\begin{extraproof}{Theorem~\ref{thm:mixed-upper-bound}}
Let candidate $x$ be the first choice of a fraction
$\nu := \nu_x$ of voters.
The probability that $x$ is chosen under \MECH is 
\begin{align*}
(1-\frac{1}{n-1}) \cdot \nu + \frac{1}{n-1} \cdot \frac{\nu^2}{\sum_y \nu_y^2}
& \leq
(1-\frac{1}{n-1}) \cdot \nu
  + \frac{1}{n-1} \cdot \frac{\nu^2}{\nu^2
     + (n-1) \left( \frac{1-\nu}{n-1} \right)^2}
\\ & =
(1-\frac{1}{n-1}) \cdot \nu
+ \frac{\nu^2}{(n-1) \cdot \nu^2 + (1-\nu)^2}.
\end{align*}

Multiplying with the term $\frac{1-\nu}{\nu}$, we now have
\[
  (1-\frac{1}{n-1}) \cdot (1-\nu)
  + \frac{\nu \cdot (1-\nu)}{(n-1) \cdot \nu^2 + (1-\nu)^2}.
\]
By Lemma~\ref{lem:technical-bound}, this quantity is bounded by
$1-\frac{1}{n}$.
Since this bound holds for all $x$ and all $\nu_x$,
we can substitute it into Lemma~\ref{lem:GAX},
and obtain a bound of $3-\frac{2}{n}$ on the distortion, as claimed.
\end{extraproof}

\begin{extraproof}{Lemma~\ref{lem:technical-bound}}
We want to upper-bound 
$f(t) = (1-\frac{1}{n-1}) \cdot (1-t) + \frac{t (1-t)}{(n-1) t^2 + (1-t)^2}$.
First, we have that $f(0) = 1-\frac{1}{n-1}$,
and $f(1) = 0$, so the inequality holds at the extreme points.

We lower-bound the denominator $g(t) = (n-1) t^2 + (1-t)^2$ of the second term.
By setting the derivative $g'(t) = 0$,
we get that the only local extremum is a minimum at $t=\frac{1}{n}$,
where $g(1/n) = \frac{n-1}{n}$,
whereas $g(0) = 1$ and $g(1) = n-1$.
Thus, $g(t) \geq \frac{n-1}{n}$.
Substituting the lower bound on $g(t)$, we can bound 
\[
  f(t)
  \; \leq \;   (1-\frac{1}{n-1}) \cdot (1-t)
             + \frac{n \cdot t \cdot (1-t)}{n-1}.
\]
A derivative test shows that this expression has a local maximum
at $t = \frac{1}{n}$, where its value is $1-\frac{1}{n}$.
Thus, we have shown that $f(t) \leq 1-\frac{1}{n}$ for all $t \in [0,1]$.
\end{extraproof}

\section{Conclusions} \label{sec:conclusions}
As we already discussed in the introduction and
Section~\ref{sec:general-lower}, there is a gap of
$\Theta(\log n)$ in the lower bound on distortion we achieve
for $k$-entry social choice rules and more general
\PARTS-communication bounded social choice rules.
It does not appear that our techniques from
Section~\ref{sec:general-lower} can be directly generalized to produce
bounds matching the ones of Theorem~\ref{thm:simple-lower}.
Thus, if the stronger bound holds more generally,
a proof will likely require a deeper understanding of the
combinatorial structure of partitions of \ALLPERMS.
An intriguing alternative is that there may be a mechanism
in which voters communicate only $\Theta(1)$ bits of information per
candidate, but which nonetheless achieves constant distortion.
An obstacle to designing such mechanisms is that it is very unclear
how a mechanism would make use of information in which it
cannot distinguish between several very different rankings.

Throughout this article, we assumed that all voters use the same
``encoding'' in communicating with the mechanism.
For both $k$-entry social choice rules and \PARTS-communication
bounded rules, one could consider relaxing this uniformity,
although voting mechanisms which treat voters differently
\emph{a priori} are typically not widely accepted.
For $k$-entry social choice rules, our lower-bound proof can be
directly adapted to give the same lower bound so long as no voter (or
almost no voter) gets to specify which candidate she ranks last.
However, the proof does not carry over directly when some, but not
all, voters can specify their bottom-ranked candidate,
since our technique of ``sacrificing'' a candidate may come at a
higher cost to the adversary.
For \PARTS-communication bounded rules, it is much less clear how to
deal with arbitrarily differing encodings.

A further generalization would be to let voters \emph{choose} which
encoding to use, or which subset of positions to fill in.
Mechanisms allowing such a choice by the voters would have to be
considered as ``non-deterministic,'' because there is not a unique
message any more for each ranking.
This raises the issue of how a voter would determine which of many
possible messages to send.
In particular, the specific choice of message may encode additional
(e.g., cardinal) information about the voter's ranking.
It would require some subtlety to define a model to rule out the
revelation of a lot of cardinal information, while still allowing
voters non-trivial choices.

Here, we only considered single-round mechanisms.
It is well-known that in many settings,
including in the implementation of social choice rules
\cite{boutilier:rosenschein:incomplete,segal:nash-limited-communication},
multiple rounds of communication can lead to significantly
(including exponentially) lower overall communication.
Indeed,
\cite{gross:anshelevich:xia:agree,fain:goel:munagala:prabhu:referee}
studied \emph{randomized} 
multi-round voting mechanisms with the explicit goal of reducing the
required communication, while achieving low distortion.
In the case of randomized mechanisms, receiving $\log_2 n$ bits of
information from each voter is enough to achieve distortion
$3-\frac{2}{n}$ (as we showed in Section~\ref{sec:randomized} --- it
was known previously how to achieve distortion 3),
so the room for improving the required communication with multiple
rounds is limited.
However, for \emph{deterministic} mechanisms, there is potential for
significant improvement, and a natural question is whether one might
even achieve constant distortion with only $O(\log n)$ (or
$O(\text{polylog}(n))$) communication from each voter.

\subsubsection*{Acknowledgements}
The author would like to thank
Elliot Anshelevich, Yu Cheng, Shaddin Dughmi, Tyler LaBonte, Jonathan
Libgober, and Sigal Oren for useful conversations and pointers,
and anonymous reviewers for useful feedback.

\bibliographystyle{plain}
\bibliography{names,conferences,publications,bibliography}

\end{document}